\documentclass[lettersize,journal]{IEEEtran}

\usepackage{cite}
\usepackage{amsmath,amssymb,amsfonts}
\usepackage{graphicx}
\usepackage{textcomp}
\usepackage{xcolor}
\usepackage[version=4]{mhchem}
\usepackage{siunitx}
\usepackage{array}
\usepackage{longtable}
\usepackage{nomencl}
\usepackage{mathrsfs} 
\usepackage{mathtools} 
\usepackage{optidef}
\usepackage{multicol}
\usepackage{svg}
\usepackage{lipsum}
\usepackage{calrsfs}
\usepackage{scalerel}
\usepackage{stackengine}
\usepackage{booktabs}
\usepackage{caption}
\usepackage{enumerate}
\usepackage[english]{babel}
\usepackage{amsthm}
\usepackage{amsmath}
\usepackage{boxedminipage}
\usepackage{multirow,nicefrac}
\usepackage{makecell,upgreek}
\usepackage{footnote}
\usepackage{longtable}
\usepackage[shortlabels]{enumitem}
\usepackage{tablefootnote}
\usepackage[T1]{fontenc}
\usepackage{verbatim} 
\usepackage{float}
\usepackage[utf8]{inputenc}
\usepackage{url}
\usepackage{cite}
\usepackage[breaklinks=true]{hyperref}
\usepackage{amsfonts,amssymb,mathrsfs}
\usepackage{mathtools}
\usepackage{latexsym}
\usepackage{relsize}
\usepackage{thm-restate}
\usepackage{dsfont}
\usepackage{fullpage}
\usepackage[noend]{algpseudocode}
\usepackage{algorithm}
\usepackage{subcaption}

\usepackage[normalem]{ulem}
\usepackage{cancel}

\newcommand{\be}[0]{\begin{equation}}
\newcommand{\ben}[0]{\begin{equation*}}
\newcommand{\een}[0]{\end{equation*}}
\newcommand{\bena}[0]{\begin{eqnarray*}}
\newcommand{\eena}[0]{\end{eqnarray*}}
\newcommand{\bea}[0]{\begin{eqnarray}}
\newcommand{\eea}[0]{\end{eqnarray}}

\DeclareMathAlphabet{\mathcal}{OMS}{cmsy}{m}{n}
\DeclareMathSymbol{\shortminus}{\mathbin}{AMSa}{"39}

\usepackage{prettyref}

\usepackage[capitalise,nameinlink]{cleveref}
\Crefname{equation}{Eq.}{Eqs.}
\Crefname{assumption}{Assumption}{Assumptions}
\Crefname{condition}{Condition}{Conditions}
\Crefname{claim}{Claim}{Claims}

\usepackage{breakcites,bm}

\hypersetup{colorlinks,citecolor=blue,linkcolor = blue}

\numberwithin{equation}{section}

\DeclareFontFamily{U}{mathx}{\hyphenchar\font45}
\DeclareFontShape{U}{mathx}{m}{n}{
      <5> <6> <7> <8> <9> <10>
      <10.95> <12> <14.4> <17.28> <20.74> <24.88>
      mathx10
      }{}
\DeclareSymbolFont{mathx}{U}{mathx}{m}{n}
\DeclareFontSubstitution{U}{mathx}{m}{n}
\DeclareMathAccent{\widecheck}{0}{mathx}{"71}
\DeclareMathAccent{\wideparen}{0}{mathx}{"75}

\newcommand{\ignore}[1]{}

	\theoremstyle{plain}
	\newtheorem{theorem}{Theorem}

	\newtheorem{lemma}{Lemma}[section]
	
	\newtheorem{corollary}{Corollary}[section]
	\newtheorem{proposition}[lemma]{Proposition}

	\theoremstyle{definition}

	\newtheorem{definition}{Definition}[section]

	\newtheorem{remark}{Remark}[section]

  \newtheorem{assumption}{Assumption}[section]
  \newtheorem{condition}{Condition}[section]

\makeatletter
\newcommand{\neutralize}[1]{\expandafter\let\csname c@#1\endcsname\count@}
\makeatother

\newtheorem*{theorem*}{Theorem}
\newtheorem*{lemma*}{Lemma}
\newtheorem*{corollary*}{Corollary}
\newtheorem*{proposition*}{Proposition}
\newtheorem*{claim*}{Claim}
\newtheorem*{fact*}{Fact}
\newtheorem*{observation*}{Observation}

\newtheorem*{definition*}{Definition}
\newtheorem*{remark*}{Remark}
\newtheorem*{example*}{Example}

\newtheoremstyle{named}{}{}{\itshape}{}{\bfseries}{}{.5em}{\Cref{#3} {\normalfont (informal)} }
{}
\theoremstyle{named}

\theoremstyle{plain}

\DeclareMathAlphabet{\mathbfsf}{\encodingdefault}{\sfdefault}{bx}{n}

\DeclareMathOperator*{\argmin}{arg\,min}

\renewcommand{\leq}{~\le~}
\renewcommand{\geq}{~\ge~}

\let\oldtfrac\tfrac
\renewcommand{\tfrac}[2]{\smash{\oldtfrac{#1}{#2}}}

\let\nablaold\nabla
\renewcommand{\nabla}{\nablaold\mkern-2.5mu}

\newcommand\redsout{\bgroup\markoverwith{\textcolor{red}{\rule[0.5ex]{2pt}{0.4pt}}}\ULon}

\begin{document}


\title{\, \\ Flight Envelope Protection for a Hypersonic Glide Vehicle Using Adaptive Safety-Critical Control}


\author{Johannes~Autenrieb,~\IEEEmembership{Member,~IEEE,} Peter~A.~Fisher,~\IEEEmembership{Member,~IEEE,} and Anuradha~M.~Annaswamy,~\IEEEmembership{Fellow,~IEEE}
\thanks{J. Autenrieb is with the German Aerospace Center (DLR), Institute of Flight Systems, 38108, Braunschweig, Germany. E-mail: {\tt johannes.autenrieb@dlr.de}.}
\thanks{P.A. Fisher and A.M. Annaswamy are with the Department of Mechanical Engineering, Massachusetts Institute of Technology, Cambridge, MA 02139, USA. E-mail: {\tt\{pafisher, aanna\}@mit.edu}.
\newline The second author would like to acknowledge the support of the Boeing Strategic University Initiative and the Air Force Research Laboratory.}
}


\hyphenation{op-tical net-works semi-conduc-tor IEEE-Xplore}

%

\maketitle


\begin{abstract}
This paper presents an adaptive safety-critical control framework for flight envelope protection (FEP) of hypersonic glide vehicles (HGVs) under model uncertainty. The proposed architecture treats input and state constraints through two complementary but distinct mechanisms. At the input end, an adaptive controller with a calibrated closed-loop reference model (CCRM) is designed to accommodate magnitude-limited control inputs and the effects of actuator saturation. Building on this input-constrained adaptive control architecture, flight envelope state constraints are enforced at the state end by an error-based safety filter (EBSF) based on control barrier functions (CBFs). The EBSF modifies the reference command and adapts the admissible safe set online using the measured mismatch between the reference model and the uncertain plant, thereby preserving forward invariance during transient adaptation. Simulation results for the DLR generic hypersonic glide vehicle 2 (GHGV-2) demonstrate stable, high-performance tracking under magnitude-limited control inputs while maintaining flight envelope constraints in the presence of model uncertainty.
 
\end{abstract}

\begin{IEEEkeywords}
Adaptive control, control barrier functions, state constraints, safety-critical systems, hypersonic systems
\end{IEEEkeywords}

%
\IEEEpeerreviewmaketitle


\section{Introduction}
\label{Introduction}
\IEEEPARstart{H}{ypersonic} glide vehicles (HGVs) operate in regimes characterized by strong aerodynamic coupling, rapidly varying flow conditions, severe thermal and structural loads, and limited control authority. Integrated flight control systems must therefore not only achieve high tracking performance, but also respect stringent operational constraints. In this context, flight envelope protection (FEP) is essential to ensure that quantities such as angle of attack, load factor, and altitude remain within admissible limits, thereby preventing loss of control, structural overstress, excessive heating, or violation of aerodynamic controllability constraints~\cite{Oudin2017}.

Classical FEP methods are often implemented through reference clipping or model-based command filtering~\cite{Tang2009,Falkena2010}. While attractive due to their simplicity and compatibility with existing flight control architectures, these methods provide limited guarantees. Reference clipping does not account for closed-loop dynamics and may therefore lead to transient constraint violations~\cite{Seo2017}. Model-based command filters improve transient behavior by incorporating simplified dynamics~\cite{Lombaerts2017}, but remain sensitive to model mismatch and may become unreliable under significant uncertainty. This is particularly critical for hypersonic systems, where aerodynamic uncertainty is unavoidable and even short constraint violations may compromise safety.

Control barrier functions (CBFs) offer a principled alternative by formulating FEP as a set-invariance problem. Safety is encoded through a continuously differentiable function whose superlevel set represents the admissible operating region, and control inputs are chosen such that this set remains forward invariant~\cite{Ames_2017}. This makes CBFs well suited for real-time optimization-based safety-critical control~\cite{Agrawal2021-aw}. However, standard CBF formulations are typically derived for known dynamics. When applied directly to uncertain systems, the true plant trajectory may deviate from the nominal model used in the barrier condition, and forward invariance is no longer guaranteed.

Adaptive control addresses parametric uncertainty by estimating unknown model parameters online and adjusting the control law accordingly~\cite{Narendra2005}. In particular, model reference adaptive control (MRAC) and related methods provide well-established stability and tracking guarantees for uncertain systems. However, adaptive control alone does not ensure constraint satisfaction during the transient learning phase. Since parameter estimates may temporarily differ significantly from their true values, plant-reference mismatch can cause constraint violations before convergence. Thus, stability and tracking guarantees are insufficient for safety-critical systems in which constraints must hold for all time.

Several adaptive CBF methods have been proposed to address safety under parametric uncertainty. In~\cite{Ames_2020a}, adaptive CBF conditions are obtained by embedding uncertain parameters directly into the barrier constraints and enforcing the resulting inequalities over admissible parameter and adaptation-gain sets. Extensions accounting for input-matrix uncertainty have also been considered~\cite{Wang2024}. While these approaches provide important theoretical guarantees, they typically formulate safety directly at the plant input-level and can lead to restrictive invariance conditions. Robust adaptive CBFs reduce some of the conservatism~\cite{Lopez2021}, but their integration with classical MRAC structures remains challenging.

A reference-based safety architecture was introduced in~\cite{Autenrieb2023}, where a CBF safety filter certifies a safe reference command for a considered reference model, while an adaptive controller tracks this certified reference with the uncertain plant. This approach was extended in~\cite{fisher_2025} using an error-based safety filter (EBSF), where the measured mismatch between the reference model and uncertain plant is incorporated directly into the safety condition. This allows the admissible set for the reference dynamics to be adjusted online, reducing conservatism compared with static safety buffers and, more importantly, ensuring forward invariance of the closed-loop adaptive system.

This paper develops an adaptive safety-critical control framework for FEP of the DLR generic hypersonic glide vehicle 2 (GHGV-2) that builds on the approaches in \cite{Autenrieb2023, fisher_2025}. The framework integrates the treatment of input and state constraints within a unified adaptive control architecture. Input constraints caused by actuator magnitude limits and overactuation are handled using an adaptive controller with a calibrated closed-loop reference model (CCRM), enabling systematic integration with control allocation. State constraints associated with the flight envelope are enforced using a CBF-based EBSF, which tightens the safety constraints online based on measured mismatch between the plant and reference model. This separation is particularly suitable for overactuated hypersonic vehicles, where actuator limits and flight envelope constraints must be enforced simultaneously under significant uncertainty.

The main contributions of this paper are as follows. First, FEP for the GHGV-2 is formulated as a state-constrained adaptive control problem under parametric uncertainty and actuator limitations. The considered envelope constraints include load-factor limits associated with stall and structural boundaries, as well as operational Mach-altitude corridor constraints, which are translated into admissible angle-of-attack bounds. Second, a CCRM-based adaptive control architecture is developed to accommodate magnitude-limited control inputs and enable integration with control allocation for the overactuated GHGV-2. Third, a CBF-based EBSF is incorporated to enforce flight envelope state constraints during adaptation. Unlike static-buffer approaches, the EBSF uses measured tracking error to adjust the admissible safe set online, reducing conservatism while preserving forward invariance. Finally, nonlinear simulations of the DLR GHGV-2 are carried out to
demonstrate the advantages of the proposed adaptive control architecture.


The remainder of this paper is organized as follows. Section~\ref{Flight_Dynamics} introduces the GHGV-2 flight dynamics model, the linearized design model, and the attainable moment set used to represent actuator limitations. Section~\ref{Problem Formulation} formulates the FEP problem under uncertainty by deriving load-factor and aerodynamic-corridor constraints and expressing them as state constraints. Section~\ref{Safe_AC} presents the proposed adaptive safety-critical control architecture, including the CCRM-based adaptive controller for input-constrained tracking and the EBSF for state-constraint enforcement. Section~\ref{Numerical_Assessment} presents nonlinear GHGV-2 simulation results and comparisons with baseline and other safety-critical architectures. Section~\ref{Conclusions} concludes the paper.

\section{The Hypersonic Glide Vehicle's Flight Dynamics Model}
\label{Flight_Dynamics}
This paper focuses on the GHGV-2 designed to achieve enhanced lift-to-drag ratios during sustained flight in high Mach number regimes~\cite{Gruhn2020}.  
A typical hypersonic flight starts with a launch using a booster rocket which transports the vehicle to an altitude of approximately 100km~\cite{Autenrieb2021,Autenrieb2024}. The vehicle then separates and transitions to a free flight phase. The onboard GNC systems manage this transition and guide the vehicle along a prescribed trajectory using propulsion-system-based corrections for high-altitude exoatmospheric flight. Subsequently, when the vehicle reenters denser atmospheric layers, it transitions from thruster-based to aerodynamic control and initiates the endoatmospheric glide phase. This paper focuses on this glide phase where the goal is to follow a position trajectory towards a designated terminal location. 

We begin with a full nonlinear rigid-body six-degree-of-freedom model of the GHGV-2 (see \cite{Autenrieb2021, Gruhn2018,Autenrieb2023sc,Autenrieb2024} for details). The complete model describes both the translational motion of the vehicle relative to the Earth-fixed frame and the rotational motion about the body-fixed axes. Consequently, the overall system dynamics encompasses the position, velocity, and flight path dynamics required for trajectory propagation, as well as the aerodynamic attitude dynamics governing the evolution of the vehicle orientation.

The relevant dynamics can be defined as:
\begin{equation}
    \label{general_nonlinaer_dynamics_1}
    {\dot{X}}(t) = f({X}(t),\, { \delta}(t), {\sigma}),
\end{equation}
where $\sigma$ denotes the operating point of the GHGV-2 and is given by
\begin{equation}
    \label{eqn:flight point params}
    {\sigma} = \begin{bmatrix} \operatorname{Ma} & H \end{bmatrix}^T,
\end{equation}
where $\operatorname{Ma}$ is the Mach number and $H$ is the altitude, ${X}(t) \in \mathbb{R}^n$  is the system state, given by
\begin{equation}
    \label{eqn:state vector definition}
    {X}(t) = 
    \begin{bmatrix}
     \mu(t) & 
    \alpha(t) & 
    \beta(t) & 
     p(t) & 
     q(t) & 
     r(t)\end{bmatrix}^T, 
\end{equation}
$\mu$ is the  flight path bank angle, $\alpha$ the angle of attack, $\beta$ the sideslip angle, and $p$, $q$, and $r$ are the body-fixed rotational rates. The control input ${\delta}(t) \in \mathbb{R}^o$ is given by
\begin{equation}
    \label{eqn:input vector definition evaluation model}
     {\delta}(t) = \begin{bmatrix} 
    \delta_1(t) & 
    \delta_2(t) & 
    \delta_3(t) & 
    \delta_4(t) \end{bmatrix}^T,
\end{equation}
which correspond to the surface deflections (see Fig.~\ref{fig:Available control effectors during endoatmospheric operations}) of the upper left fin $\delta_{1}(t)$, upper right fin $\delta_{2}(t)$, lower left fin $\delta_{3}(t)$, and lower right fin $\delta_{4}(t)$, respectively.

The tracking goal of interest is to design $\delta(t)$ such that $X(t)$ follows a prescribed trajectory $X^*(t)$ in the glide phase. We assume that additional guidance systems are  in place to ensure that $X^*(t)$  is such that the operating point $\sigma $ is at  a desired location. 

\begin{figure}
\centering
\includegraphics[width=\columnwidth]{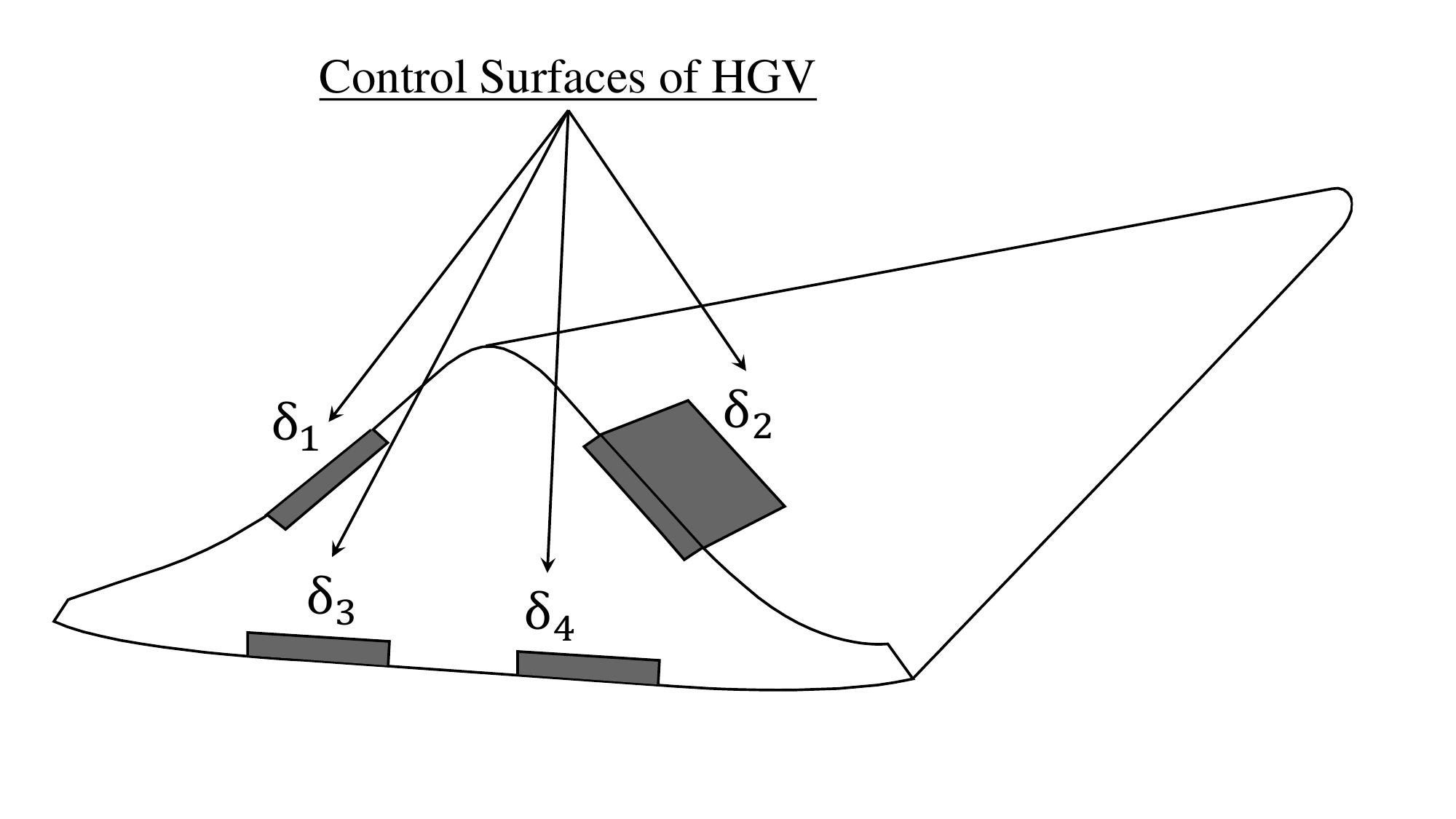}
\caption{Rear view of the GHGV-2 and available control effectors during endoatmospheric operations \cite{Autenrieb_2025b}.}
\label{fig:Available control effectors during endoatmospheric operations}
\end{figure}

\subsection{Design Model for Flight Controller Synthesis}
\label{sec_Design_Model}
In order to facilitate a control design, we develop the following design model, obtained by linearizing \eqref{general_nonlinaer_dynamics_1} around a trim state $X_0$ and trim input $U_0$ to obtain
\begin{equation}
\label{eqn:State-Space System 1}
{\dot{x}}(t) = A({\sigma}_0)\,{x}(t)+B({\sigma}_0)\,{u}(t),
\end{equation}
where $x(t)=X(t)-X_0$, $u(t)=U(t)-U_0$, $f(X_0,\delta_0,\sigma_0)=0$.  $(X_0,U_0)$ corresponds to the operating point  $\sigma_0= \begin{bmatrix} \operatorname{Ma}_0 & H_0 \end{bmatrix}^T$. The control input $u(t)$ is given by
\begin{equation}
    \label{eqn:input vector definition}
     {u}(t)= \begin{bmatrix} 
    M_x(t) & 
    M_y(t) & 
    M_z(t)\end{bmatrix}^T,
\end{equation}
which correspond to the moments around the $x,y,z$ axes, respectively, and can be viewed as a virtual control input~\cite{Johnson2006, Chen2021}.  The tracking goal can be restated as using the linear model in \eqref{eqn:State-Space System 1} for the design of $u(t)$ in \eqref{eqn:input vector definition} so that $x(t)$ tracks a command $x^*$. 

In order to map the desired virtual control input $u(t)$ to physical control surface deflection $\delta^*(t) = \delta(t) - \delta_0$, where $\delta^*(t)$ denotes the effective deflection relative to the trim condition $\delta_0$, a control allocation algorithm is used~\cite{Autenrieb2026JGCD}. We define an admissible set $\Phi$ as
\begin{equation}
\label{eqn:admissible _input_set}
\begin{split}
{\Phi} := \{\delta^*(t) \in \mathbb{R}^{o} \, \rvert \, \forall i \in [1,o] : &\delta^*_{min,i}\\ &\leq \delta^*_i(t)\leq \delta^*_{max,i}\},
\end{split}
\end{equation} 
where $\delta^*_{min,i}$ and $\delta^*_{max,i}$ are allowable deflection limits for each control surface $i$. These are also the limits the adaptive safety-critical control system has to respect while ensuring desired tracking and guaranteed FEP.


Using the admissible set $\Phi$, we construct a physical control volume $\mathcal{U}$ as
\begin{equation}
\label{eqn:control volume moment}
    \mathcal{U} := \{  u(t) \in \mathbb{R}^m \, | \,   u(t) = B_\delta({\sigma}_0) \delta^*(t) \wedge \delta^*(t) \in {\Phi} \},
\end{equation}
with $B_\delta(\sigma_0)$ being the control effectiveness matrix defined by the Jacobian
\begin{equation}
    \label{eqn:Control effectivnes matrix}
    B_\delta({\sigma}_0) = \begin{bmatrix} 
    \displaystyle \frac{\partial M_x}{\partial \delta_1 }&\displaystyle\frac{\partial M_x}{\partial \delta_2 }&\displaystyle\frac{\partial M_x}{\partial \delta_3 }&\displaystyle\frac{\partial M_x}{\partial \delta_4 }\\[8pt]
    \displaystyle\frac{\partial M_y}{\partial \delta_1 }&\displaystyle\frac{\partial M_y}{\partial \delta_2 }&\displaystyle\frac{\partial M_y}{\partial \delta_3 }&\displaystyle\frac{\partial M_y}{\partial \delta_4 }\\[8pt]
    \displaystyle\frac{\partial M_z}{\partial \delta_1 }&\displaystyle\frac{\partial M_z}{\partial \delta_2 }&\displaystyle\frac{\partial M_z}{\partial \delta_3 }&\displaystyle\frac{\partial M_z}{\partial \delta_4 }
    \end{bmatrix}.
\end{equation}\medskip

This control volume $\mathcal{U}$ essentially defines the limits on the control moments that can be generated by the allowable deflection limits as in \eqref{eqn:admissible _input_set}. This in turn allows us to construct a saturation function $R_u( u(t))$ defined as 
\begin{equation}
R_u( u(t)) =
\begin{cases}
 u(t), & \text{if } \| u(t)\| \leq g( u(t)), \\
\bar{ u}(t), & \text{if } \| u(t)\| > g( u(t)),
\end{cases}
\label{eqn:MRAC_SettlementFunction}
\end{equation}
where \(g( u(t))\) describes the maximum permissible value of the moment vector,  which is chosen as~\cite{Gaudio_2019b}:
\begin{equation}
g( u(t)) = \left( \sum_{i=1}^{3} \left( \frac{\hat{ e}_i}{u_{\max,i}} \right)^2 \right)^{-1/2},\label{eqn:g(u)}
\end{equation}
that is, an ellipsoid, where \(\hat{ e} = \dfrac{ u(t)}{\| u(t)\|}\) describes the normalized direction vector, \(u_{\text{max},i}\) represents the magnitude of $u(t)$ in direction $i$ of the ellipsoid along the three semi-axes and \(\bar{ u}(t) = \hat{ e} \cdot g( u(t))\) is the saturated input vector projected onto the edge of the ellipsoid (see Fig. ~\ref{fig:AMS_Ellipsoid}). The three magnitudes \(u_{\text{max},i}\) in $g( u(t))$, $i=1,2,3$ are chosen such that the corresponding admissible set $\Phi$ in \eqref{eqn:admissible _input_set} for a choice of $u(t)$ as in \eqref{eqn:control volume moment} inscribes the largest ellipsoid spanned by $g(u(t))$. 
This construction preserves the direction of the original control output while ensuring that the commanded moment remains inside the admissible set, thereby providing a computationally efficient mechanism for incorporating magnitude limits on control inputs into an adaptive controller.
\begin{figure*}[h!]
\centering
\includegraphics[width=0.7\textwidth]{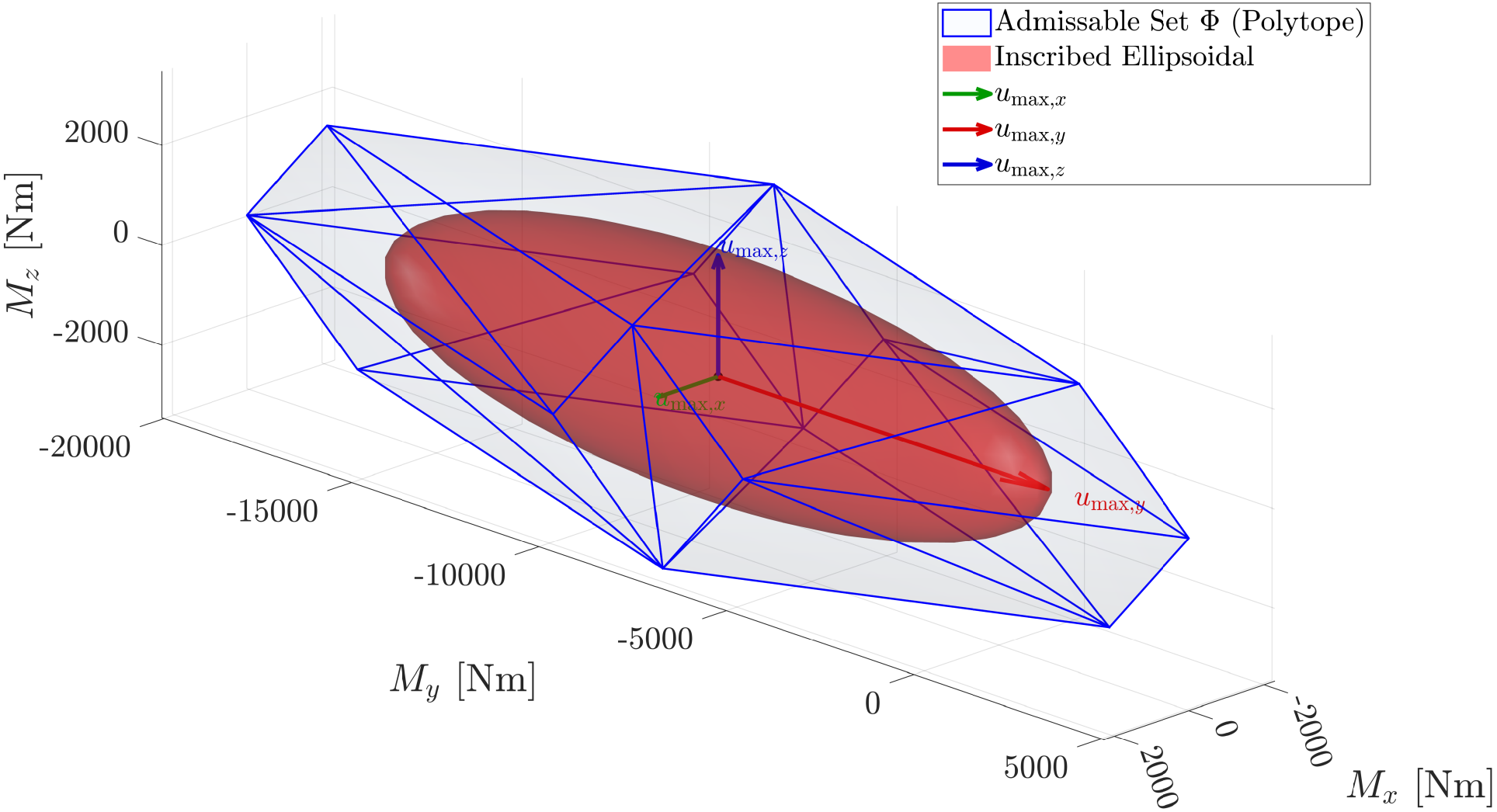}
\caption{Attainable Moment Set with inscribed ellipsoid.}
\label{fig:AMS_Ellipsoid}
\end{figure*}


\subsection{Uncertainty-Driven Model Modifications}

In addition to incorporating magnitude limits on the control input, our vehicle design model is also assumed to have parametric uncertainties in aerodynamics and inaccuracies in control effectiveness.
These parametric uncertainties lead to an uncertain design model of the plant in the form
\begin{equation}
\label{eqn:LinearPlantModel_Problem}
\dot{x}(t) = A(\sigma_0)x(t) + B(\sigma_0)\Lambda\, R_u(u(t)),
\end{equation}
where $A(\sigma_0) \in \mathbb{R}^{n \times n}$ and $\Lambda \in \mathbb{R}^{m \times m}$ are unknown, with $\Lambda$ being diagonal and strictly positive, and $B(\sigma_0) \in \mathbb{R}^{n \times m}$ is assumed to be known.

\section{The Flight Envelope Protection Problem in the Presence of Uncertainties}
\label{Problem Formulation}
As discussed in Section~\ref{Flight_Dynamics}, the GHGV-2 operates within a hierarchical guidance and control architecture in which higher-level trajectory generation and flight-path guidance modules determine the desired vehicle motion and provide required attitude states to the inner-loop flight control system. Consequently, the considered state vector $X(t)$ must evolve such that the vehicle accurately follows $X^*(t)$.
In addition to achieving satisfactory tracking performance of the guidance-generated attitude commands, it is also important that the hypersonic glide vehicle operates within physically admissible conditions that prevent excessive thermal loads, structural overstressing, actuator saturation, and loss of aerodynamic controllability. Violations of these limits may compromise both vehicle safety and mission success. Consequently, the flight control system must not only stabilize the vehicle and track the commanded motion, but also actively enforce operational constraints throughout the maneuver. In other words, we need to integrate a flight envelope protection framework into the problem that will guarantee safe operation even while achieving a desired closed-loop tracking performance.


The specific constraints considered in this work are associated with stall and structural load limits, as well as aerodynamic corridor requirements arising from operational boundaries. 
In what follows, we will model these constraints as limits on relevant states. This in turn will allow us to pose an overall tracking problem of a linear dynamic system subject to state constraints and parametric uncertainties.


\subsection{Load Constraints Due to Stall and Structural Limits}
\label{sec:Stall_Structural_Limits}

As the vehicle moves through the atmosphere, it generates aerodynamic lift, which scales with the dynamic pressure $Q(t)$, defined as
\begin{equation}
Q(t) = \frac{1}{2}\rho(t)V(t)^2,
\end{equation}
where $\rho(t)$ denotes the air density and $V(t)$ the flight speed. Since $V(t)$ is determined by the Mach number and the altitude, and since $\rho(t)$ is likewise a function of the altitude, the dynamic pressure $Q_0$ can be assumed to be pointwise constant for a given operating point $\sigma_0$.

The structural loads induced by aerodynamic forces are commonly expressed in terms of the normal load factor $n_z(t)$, which can be written as
\begin{equation}
\begin{split}
n_z(t) = \frac{Q_0 S}{mg}
\big(
& C_{Z,\alpha}(\sigma_0)\alpha(t)
+ C_{Z,\delta_1}(\sigma_0)\delta_1(t)\\
&+ C_{Z,\delta_2}(\sigma_0)\delta_2(t) 
 + C_{Z,\delta_3}(\sigma_0)\delta_3(t)\\
&+ C_{Z,\delta_4}(\sigma_0)\delta_4(t)
\big),
\end{split}
\label{eqn:nz_realationship}
\end{equation}
where $S$ denotes the aerodynamic reference area, $m$ the vehicle mass, $g$ the gravitational acceleration, and the aerodynamic coefficients depend on the operating point $\sigma_0$.
\begin{figure*}[h!]
\centering
\includegraphics[width=0.8\textwidth]{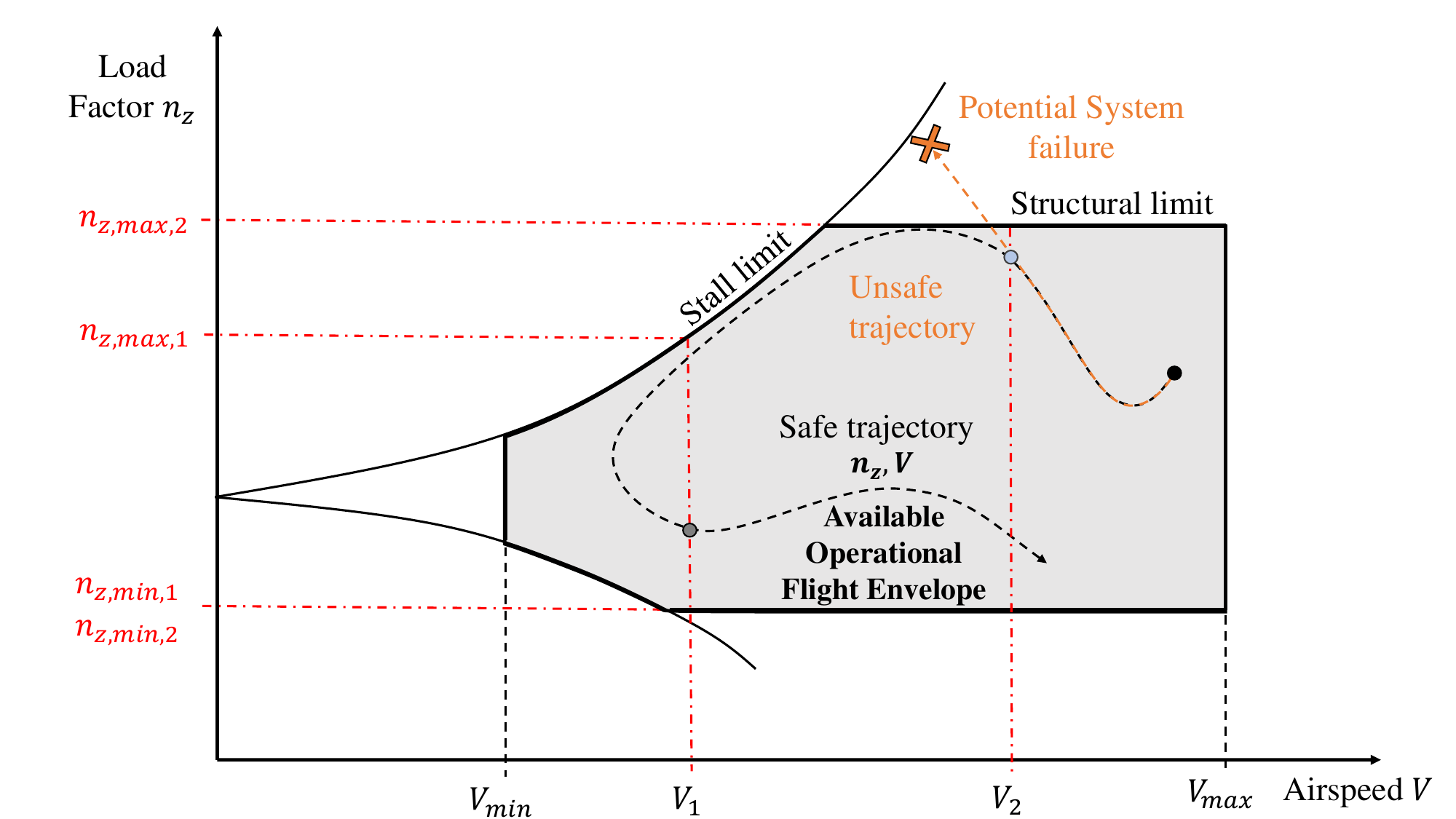}
\caption{Example of a \(V\)–\(n\) diagram illustrating the relationship between load factors and airspeed, with limits defined by stall and structural loads.}
\label{fig:V-n_Diagram}
\end{figure*}


To ensure safe operation, the vehicle must remain within a bounded set of admissible aerodynamic conditions, commonly represented by the \(V\)--\(n\) diagram shown in Fig.~\ref{fig:V-n_Diagram}. This diagram defines an upper load boundary \(n_{z,\max}(V)\), associated with structural load limitations, and a lower load boundary \(n_{z,\min}(V)\), associated with stall-related limitations. Since the present work focuses on the glide phase at a fixed operating point, the flight speed is assumed to be prescribed by the chosen glide trajectory. Consequently, the velocity-dependent load bounds can be evaluated pointwise at any given \(V_i\). Since the flight speed \(V_i\) is determined by the Mach number and altitude, the velocity dependence can be reinterpreted as a dependence on the operating point \(\sigma\), and evaluated pointwise at \(\sigma_i\), yielding corresponding constant admissible limits \(n_{z,\min}(\sigma_i)\) and \(n_{z,\max}(\sigma_i)\) (see Fig.~\ref{fig:V-n_Diagram}). By repeating this process for all operating points, we can determine the total admissible set of load factors as
\begin{equation*}
\begin{split}
S_{n_z}
:=
\{
n_z(t)\in\mathbb{R}
\;|\;
&n_{z,\min}(\sigma_0)\\
&\leq n_z(t)
\leq n_{z,\max}(\sigma_0)
\},
\end{split}
\end{equation*}
for all $t \geq t_0$.

To enforce these constraints within a control-theoretic framework, the admissible set must be reformulated in terms of the relevant system states. For this purpose, we return to \eqref{eqn:nz_realationship} and note that the contribution of control surface deflections to the load factor is comparatively small compared with that of $\alpha$, as control surfaces are primarily designed to generate moments rather than lift forces. This allows us to bound $n_z$ as 
\begin{equation}
\underline{n}_z
\leq
n_z(t)
\leq
\overline{n}_z,
\label{eqn:nz_realationship_simplified}
\end{equation}
where
\begin{equation*}
\underline{n}_z
=
\frac{Q_0 S}{mg}
\, C_{Z,\alpha}(\sigma_0)
\,\alpha^{load}_{\min}(\sigma_0),
\end{equation*}
\begin{equation*}
\overline{n}_z
=
\frac{Q_0 S}{mg}
\, C_{Z,\alpha}(\sigma_0)
\,\alpha^{load}_{\max}(\sigma_0).
\end{equation*}
By inverting \eqref{eqn:nz_realationship_simplified}, the load constraints can equivalently be expressed as the following operating-point–dependent admissible set,
\begin{equation}
\begin{split}
S^{load}_{\alpha}(\sigma_0)
:=
\{
\alpha(t)\in\mathbb{R}
\;|\;
&\alpha^{load}_{\min}(\sigma_0)\\
&\leq \alpha(t)
\leq \alpha^{load}_{\max}(\sigma_0)
\},
\end{split}
\label{eqn:alpha_crit_load}
\end{equation}
for all $t \geq t_0$. 

\subsection{Aerodynamic Corridor Constraints} \label{sec:Aerodynamic_Corridor_Constraints}
In addition to stall and structural limitations, safe operation of hypersonic glide vehicles requires adherence to an admissible aerodynamic flight corridor in the Mach-altitude domain. Deviations from this corridor may lead to insufficient aerodynamic control authority at high altitudes or excessive thermal and structural loads at lower altitudes~\cite{Li_2017,Takahashi2023-gp}.  Often an offline optimization process is carried out to determine this corridor (see 
\begin{figure}
\centering
\includegraphics[width=\columnwidth]{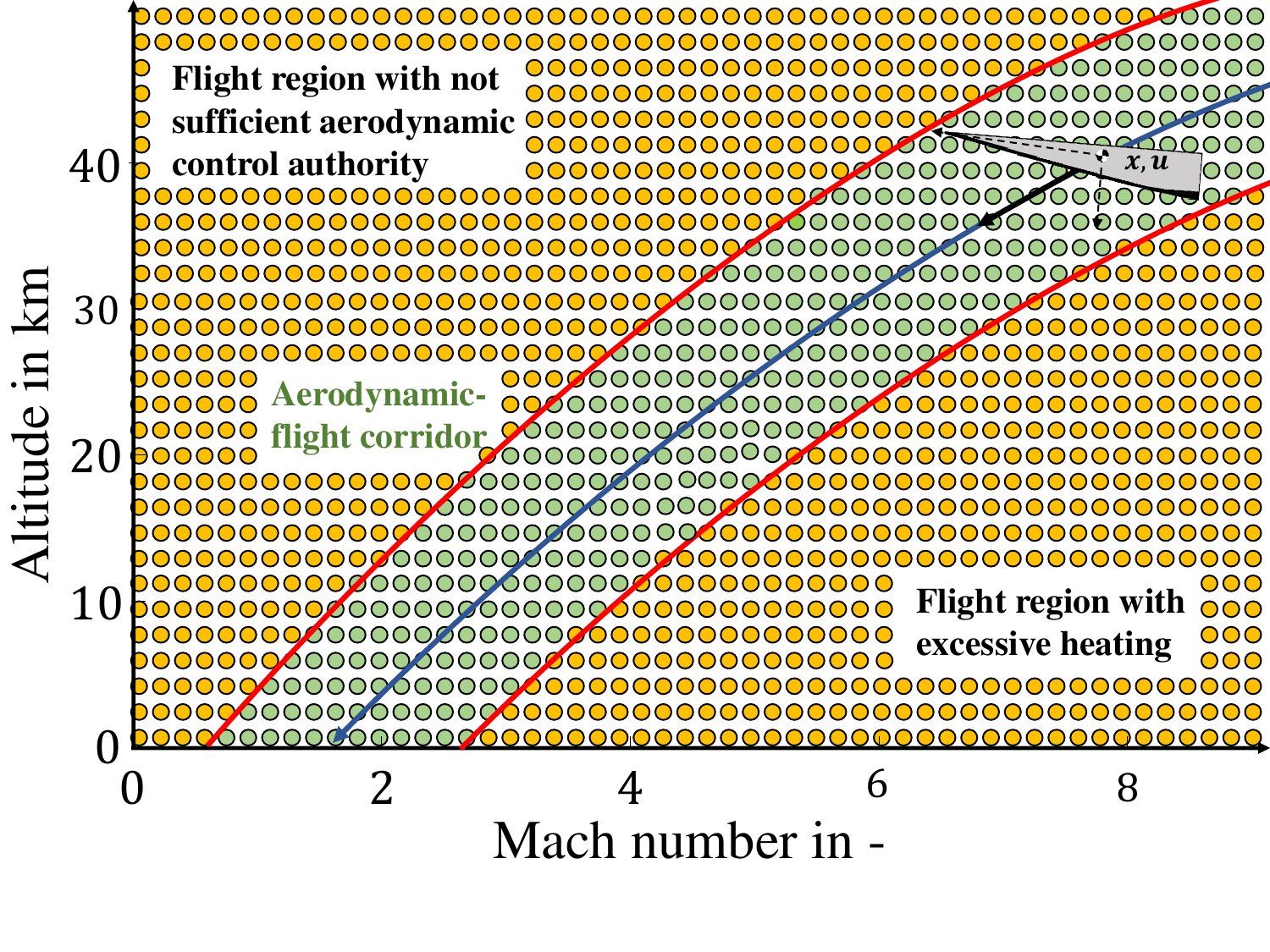}
\caption{Conceptual illustration of an operational flight corridor and reference trajectory for a hypersonic glide vehicle in the Mach-altitude space.}
\label{fig:Flight_Envelope_Operating_Points}
\end{figure}
Fig.~\ref{fig:Flight_Envelope_Operating_Points}), where the upper boundary is defined by a potential loss of control authority that a vehicle would experience due to low atmospheric density and the lower boundary is determined by potentially severe aerodynamic heating and potential structural degradation of the hypersonic system~\cite{An_2020}. For the GHGV-2, the corridor is determined via an offline process, which as a result delivers a set of allowable Mach number-altitude combinations. One could then view the blue curve in the figure as belonging to the safe region. In what follows, we translate the operation of the glide phase within this safe corridor 
\begin{equation} 
H_{\min}(\operatorname{Ma}) \leq H(t) \leq H_{\max}(\operatorname{Ma}),
\label{eq:corridor_bounds} 
\end{equation}
into state constraints. As before, if we assume that the flight speed is prescribed, the bounds $H_{\min}$ and $H_{\max}$ can be assumed to be known constants, which are obtained offline through trajectory optimization and aerodynamic–thermal feasibility analysis. Using kinematic relations, constant speed assumption, and flight path angle dynamics for the glide phase at small bank angles, we can obtain the relation

\begin{equation} 
\ddot{H}(t) = \frac{\cos\gamma(t)}{m} 
\left(\bar{L}(t)-mg\cos\gamma(t)\right), \label{eq:H_ddot_lift}
\end{equation}
which reveals that the vertical acceleration of the vehicle is directly governed by the lift force. In particular, limiting the achievable lift limits the magnitude of $\ddot{H}(t)$ and therefore constrains how rapidly the altitude can increase or decrease.
In the vicinity of the operating point $\sigma_0$, the aerodynamic lift force can be expressed as an affine function of the angle of attack, 
\begin{equation} \bar{L}(t) = Q_0 S\left(C_{\bar{L},0}(\sigma_0)+C_{\bar{L},\alpha}(\sigma_0)\alpha(t)\right),
\label{eq:lift_alpha_corridor}
\end{equation}
where $Q_0$ denotes the locally constant dynamic pressure and $C_{\bar{L},0}(\sigma_0)$,$C_{\bar{L},\alpha}(\sigma_0)$ are the lift coefficients evaluated at the operating point.
Substituting \eqref{eq:lift_alpha_corridor} into \eqref{eq:H_ddot_lift} yields
\begin{equation}
\begin{split}\ddot{H}(t)=&\frac{\cos\gamma(t)}{m}\Big(Q_0 S C_{\bar{L},0}(\sigma_0)-mg\cos\gamma(t)\Big) \\&+\frac{Q_0 S C_{\bar{L},\alpha}(\sigma_0)}{m}\cos\gamma(t)\,\alpha(t).
\end{split}
\label{eq:H_ddot_alpha}
\end{equation}
That is, the corridor constraint in \eqref{eq:corridor_bounds} can be represented by an operating-point–dependent admissible set for $\alpha(t)$ of the form
\begin{equation}
\begin{split}
S_{\alpha}^{\mathrm{corr}}(\sigma_0)
:=
\{
\alpha(t)\in\mathbb{R}
\;|\;
&
\alpha_{\min}^{\mathrm{corr}}(\sigma_0) \\
&
\leq
\alpha(t)
\leq
\alpha_{\max}^{\mathrm{corr}}(\sigma_0)
\},
\end{split}
\label{eq:alpha_corridor_constraints}
\end{equation}
for all $t \geq t_0$, where the bounds $\alpha_{\min}^{\mathrm{corr}}(\sigma_0)$ and
$\alpha_{\max}^{\mathrm{corr}}(\sigma_0)$
are chosen such that the resulting altitude evolution remains within the prescribed corridor.

\subsection{Unified Constraint Representation and Critical Constraint Selection}
\label{sec:Unified_AoA_Constraints}

Combining the stall and structural load considerations as in \eqref{eqn:alpha_crit_load}, and aerodynamic corridor requirements 
as in \eqref{eq:alpha_corridor_constraints}, we define an overall 
safe region
\begin{equation}
S_{\alpha}(\sigma_0)
=
S_{\alpha}^{\mathrm{load}}(\sigma_0)
\;\cap\;
S_{\alpha}^{\mathrm{corr}}(\sigma_0),
\label{eq:S_alpha_intersection}
\end{equation}
which corresponds to the state constraints
\begin{equation}
\alpha_{\mathrm{crit,min}}(\sigma_0)
\leq
\alpha(t)
\leq
\alpha_{\mathrm{crit,max}}(\sigma_0).
\label{eq:alpha_crit_interval}
\end{equation}
where
\begin{equation*}
\alpha_{\mathrm{crit,max}}(\sigma_0)
=
\min\!\left(
\alpha_{\max}^{\mathrm{load}}(\sigma_0),
\alpha_{\max}^{\mathrm{corr}}(\sigma_0)
\right),
\label{eq:alpha_crit_max}
\end{equation*}
\begin{equation*}
\alpha_{\mathrm{crit,min}}(\sigma_0)
=
\max\!\left(
\alpha_{\min}^{\mathrm{load}}(\sigma_0),
\alpha_{\min}^{\mathrm{corr}}(\sigma_0)
\right).
\label{eq:alpha_crit_min}
\end{equation*}

\subsection{Overall Problem Statement}
\label{sec:Overall_Problem_Statement}

The overall control problem considered in this work can now be stated as follows: Design a control input $u(t)$ such that the state $x(t)$ given by
\begin{equation}
\dot{x}(t)
=
A(\sigma_0)x(t)
+
B(\sigma_0)\Lambda R_u(u(t))
\label{eq:problem_uncertain_plant}
\end{equation}
tracks a commanded reference trajectory $x^{\ast}(t)$, while simultaneously satisfying state constraints \eqref{eq:alpha_crit_interval} which ensure flight envelope protection and the input constraints \eqref{eqn:MRAC_SettlementFunction}. This is desired to be accomplished in the presence of uncertainties in $A(\sigma_0)$ and $\Lambda$. 
In what follows, we will propose an adaptive controller to solve this problem.




\section{Adaptive Safety-Critical Control Design}
\label{Safe_AC}

\subsection{Accommodation of Input Constraints}
\label{Sect:Adaptive Control Design}
In this section, we consider the adaptive control of \eqref{eqn:LinearPlantModel_Problem} when subjected to parametric uncertainties and magnitude limits as in \eqref{eqn:MRAC_SettlementFunction}. Our goal is to establish boundedness and control tracking.
The following assumptions are made regarding the unknown parameters in \eqref{eqn:LinearPlantModel_Problem}:
\begin{assumption}
\label{assumption_matching_condition}
Constant matrices $\theta_x^*$ and $\theta_r^*$ exist that solve the following:
\begin{equation}
\label{matching_condition_1}
A_m = A_p + B_p \Lambda \theta_x^*,
\end{equation}
\begin{equation}
\label{matching_condition_2}
B_m = B_p \Lambda \theta_r^*.
\end{equation}
\end{assumption}
\begin{assumption}
The input uncertainty $\Lambda$ is a diagonal positive definite matrix of the form:
\begin{equation}
    \label{Lambda_assumption}
    \Lambda =
    \begin{bmatrix}
        \lambda_1   & 0    & 0\\
        0      & \lambda_2    & 0\\
        0           & 0    & \lambda_3
    \end{bmatrix}.
\end{equation}
\end{assumption}
In order to accommodate magnitude limits on the input and to improve the transient performance, we consider a CCRM of the form \cite{Autenrieb2023}
\begin{equation}
    \label{CCRM}
    \dot{x}_m = A_m x_m(t) + B_m r^*(t) + Le_x(t) + B_p \hat{\Lambda}\Delta u(t),
\end{equation}
where $L$ is a matrix such that $(A_m-L)$ is Hurwitz, $\Delta u(t)= R_u({u}(t))-u(t)$ and represents a disturbance due to saturation, and $\hat{\Lambda}$ is an estimation of the unknown matrix $\Lambda$. 
The idea behind the CCRM is to introduce a feedback from the state error to help improve the transients and from the saturated input signal to accommodate the effects of saturation, represented by  the last two terms in \eqref{CCRM}. When  these terms are zero, the command signal $r^*(t)$ in \eqref{CCRM} is chosen so that $x_m(t)$ reproduces the desired tracking trajectory \(x^*(t)\). This can be achieved by choosing
\begin{equation}
\label{eqn:reference_signal_computing}
    r^*(t)=B_m^\dagger\left(\dot{x}^*(t)-A_m x^*(t)\right),
\end{equation}
where \(B_m^\dagger\) denotes the  pseudoinverse of \(B_m\). 

We propose the following adaptive controller for determining the control input in the plant \eqref{eqn:LinearPlantModel_Problem}:
\begin{equation}
\label{adaptive_feeback_controller}
    u(t)  = \widehat{\theta}_x(t) x_p(t) + \widehat{\theta}_r(t) r^*(t)
\end{equation}
with the time-varying parameters 
adjusted as
\begin{equation}
\label{theta_x_adaption_law}
    \dot{\widehat{\theta}}_x(t) = - \Gamma_x x_p(t) e_x(t)^T P B_p\,, \,\,\,\,\,\, \Gamma_x>0,
\end{equation}
\begin{equation}
\label{theta_r_adaption_law}
    \dot{\widehat{\theta}}_r(t) = - \Gamma_r r^*(t) e_x(t)^T P B_p\,, \,\,\,\,\,\, \Gamma_r>0,
\end{equation}
\begin{equation}
    \label{lambda_r_adaption_law}
    \dot{\hat{\Lambda}}(t) = \Gamma_{\Lambda} \Delta u(t) e_x^T(t) P B_p\,, \,\,\,\,\,\, \Gamma_{\Lambda}>0,
\end{equation}
where $e_x(t)=x_p(t)-x_m(t)$ and $P$ is the solution of the Lyapunov equation $A_m^T P + P A_m = -Q$,  $Q>0$. The adaptive gains $\Gamma_x$, $\Gamma_r$, and $\Gamma_{\Lambda}$  are chosen to be positive definite matrices.

Using \eqref{eqn:LinearPlantModel_Problem}, \eqref{matching_condition_1}, \eqref{matching_condition_2}, \eqref{CCRM}, and \eqref{adaptive_feeback_controller}, we obtain the error dynamics
\cite{Gibson2013Access, Gaudio_2022}:
\begin{equation}
    \label{error_dyanmics_CCRM}
    \begin{split}
    \dot{e}_x(t) = (A_m - L) e_x(t) &+ B_p \Lambda (\Tilde{\theta}_x(t) x_p(t)\\ + \Tilde{\theta}_r(t) r^*(t)) &+ B_p \tilde{\Lambda}(t) \Delta u(t),
    \end{split}
\end{equation}
where $\tilde{\Lambda}(t) = \Lambda - \hat{\Lambda}(t)$ is the corresponding estimation error for $\Lambda$. 
It is easy to show from the structure of \eqref{error_dyanmics_CCRM} that the following $V$ is a Lyapunov function:
\begin{equation}
\begin{split}    
    \label{Lyapunov_function_CCRM}
    V      = &\frac{1}{2} e_x(t)^T P e_x(t) + \frac{1}{2} \operatorname{Tr}  [ \Tilde{\theta}_x(t) \Gamma_1^{-1} \Tilde{\theta}_x^T(t) \Lambda ]\\   & + \frac{1}{2} \operatorname{Tr}  [ \Tilde{\theta}_r(t) \Gamma_2^{-1} \Tilde{\theta}_r^T(t) \Lambda ]  + \frac{1}{2} \operatorname{Tr}  [ \tilde{\Lambda}(t)  \Gamma_{\Lambda}^{-1} \tilde{\Lambda}^T(t) ].
\end{split}
\end{equation}
This follows from \eqref{error_dyanmics_CCRM}, \eqref{theta_x_adaption_law}, \eqref{theta_r_adaption_law}, \eqref{lambda_r_adaption_law} and the fact that $(A_m - L)^T P +  P (A_m - L) = -Q_0$, as $\dot{V} =  - \frac{1}{2}e_x^T Q_0 e_x \leq 0$. This ensures the boundedness of $e_x(t)$ and all parameter estimates $\widehat{\theta}_x(t)$, $\widehat{\theta}_r(t)$, $\widehat{\Lambda}(t)$. 

As the input is subject to saturation, one cannot establish the global boundedness of $x_p(t)$ for arbitrary plants. However, the boundedness of $x_p(t)$ can be established for all initial conditions of $x_p(t)$, $\Tilde{\theta}_x(t)$, $\Tilde{\theta}_r(t)$, $\tilde{\Lambda}(t)$ that lie inside a compact set, similar to \cite{Gaudio_2022,karason1994,Schwager_2005}. Details of this proof are omitted due to space limitations. Such a domain of attraction is a standard requirement for any plant that can be open-loop unstable \cite{karason1994}, an example of which is the GHGV-2. 

\subsubsection{Simulation Results}
\label{subsec:Results_Adaptive_Controllers}
We evaluate the performance and choice of the adaptive controller in ~\eqref{adaptive_feeback_controller}-\eqref{lambda_r_adaption_law} using a nonlinear model of the GHGV-2 which is described in Appendix \ref{FirstAppendix} and implemented using \textsc{MATLAB}/Simulink. The operating point was chosen as $\sigma_0=[12.5, 40\,km]^T$. The initial state was chosen to be $x=[0,0,0,0,0,0]^T$. A $20\%$ uncertainty is assumed to be present in $C_{M,\alpha}$ and $C_{M,q}$, together with a $20\%$ loss in pitch control effectiveness $\lambda_2$, defined in \eqref{Lambda_assumption}. A reference input (black dash-dotted) was chosen, based on \eqref{eqn:reference_signal_computing}, following a step command with the magnitude of $4^\circ$ in the $\alpha$ channel of $x^*(t)$. The baseline controller 
is chosen using a conventional LQR method  \cite{Kalman_1960LQR}, with the following weighting matrices $Q=\mathop{diag}([1,1,1,1,1,1])$ and $R=\mathop{diag}([0.1,0.1,0.1])$. The controller synthesis delivers the gain matrices $K_x$ and $K_r$, which are also used as the initial estimates of the adaptive gains $\theta_x$ and $\theta_r$. The reference model is selected via $A_m=A+BK_x$ and $B_m=BK_r$. 
The control input was assumed to be magnitude-limited as specified in \eqref{eqn:control volume moment} to \eqref{eqn:g(u)}, with $u_{\max,y}$ set to $780\,Nm$, corresponding to the maximum pitching moment. 



\begin{figure}
    \centering
    \includegraphics[width=\columnwidth]{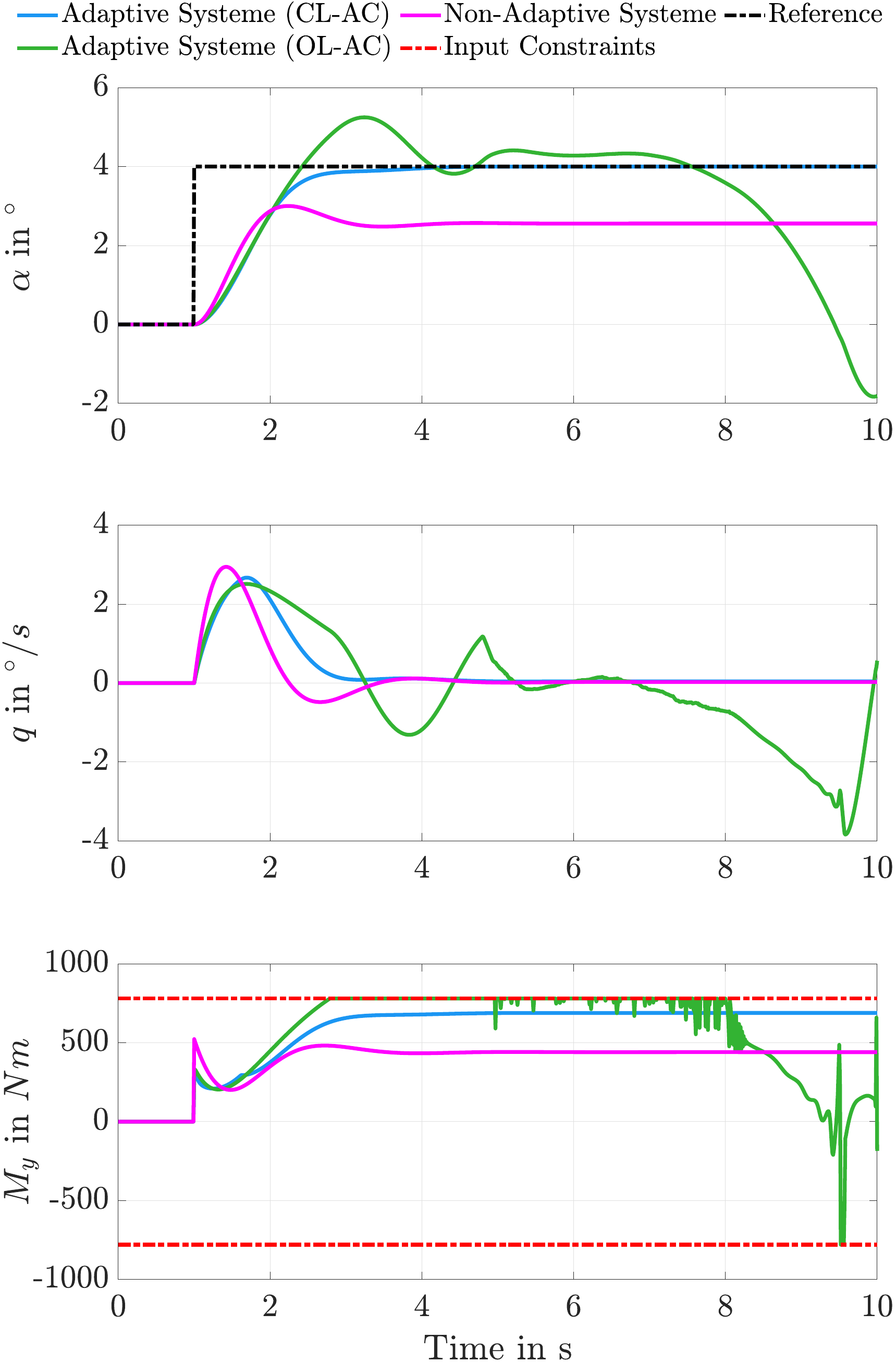}
    \caption{Comparison of the proposed adaptive system with calibrated closed-loop reference model (CL-AC) versus the adaptive system with open-loop reference model (OL-AC) and the non-adaptive baseline controller.}
    \label{fig:AC_saturation_OLvsCL}
\end{figure}

Figure~\ref{fig:AC_saturation_OLvsCL} illustrates the resulting responses of the proposed adaptive controller with the CCRM, denoted as CL-AC. This response is compared with an adaptive system with an open-loop reference model, denoted as OL-AC, which corresponds to the reference model in \eqref{CCRM} with $L$ and $\Delta u$ set to zero, which is commonly used 
in many adaptive control applications. 
As can be seen, the OL-AC fails to follow the desired $\alpha^*$ command. It first overshoots the target and subsequently becomes unstable under the tightened input constraints. 
In contrast, the proposed CL-AC (blue), which accounts for possible saturation through updated estimates of $\widehat\Lambda$ in the calibrated closed-loop reference model, remains stable and converges within the imposed limits. The non-adaptive baseline (magenta) remains stable but fails to reach the desired $\alpha^*(t)$ due to uncompensated model mismatch, resulting in a steady-state error caused by insufficient control authority.


These results clearly show that the proposed adaptive controller in ~\eqref{adaptive_feeback_controller}-\eqref{lambda_r_adaption_law} achieves satisfactory tracking performance and maintains overall stability while accommodating magnitude constraints on the control input. In the next section, we direct our attention to ensuring flight envelope protection.

\subsection{CBF-based Flight Envelope Protection Design}\label{sec:cbf}

In this section, we focus on a control architecture that ensures flight envelope protection in the presence of parametric uncertainties in the model of the GHGV-2. For this purpose, we integrate a CBF-based approach with an adaptive controller (as shown in Fig.~\ref{fig:Closed-LoopAdaptiveControlSystem}). 

\begin{figure*}
    \centering
    \includegraphics[width=0.9\textwidth]{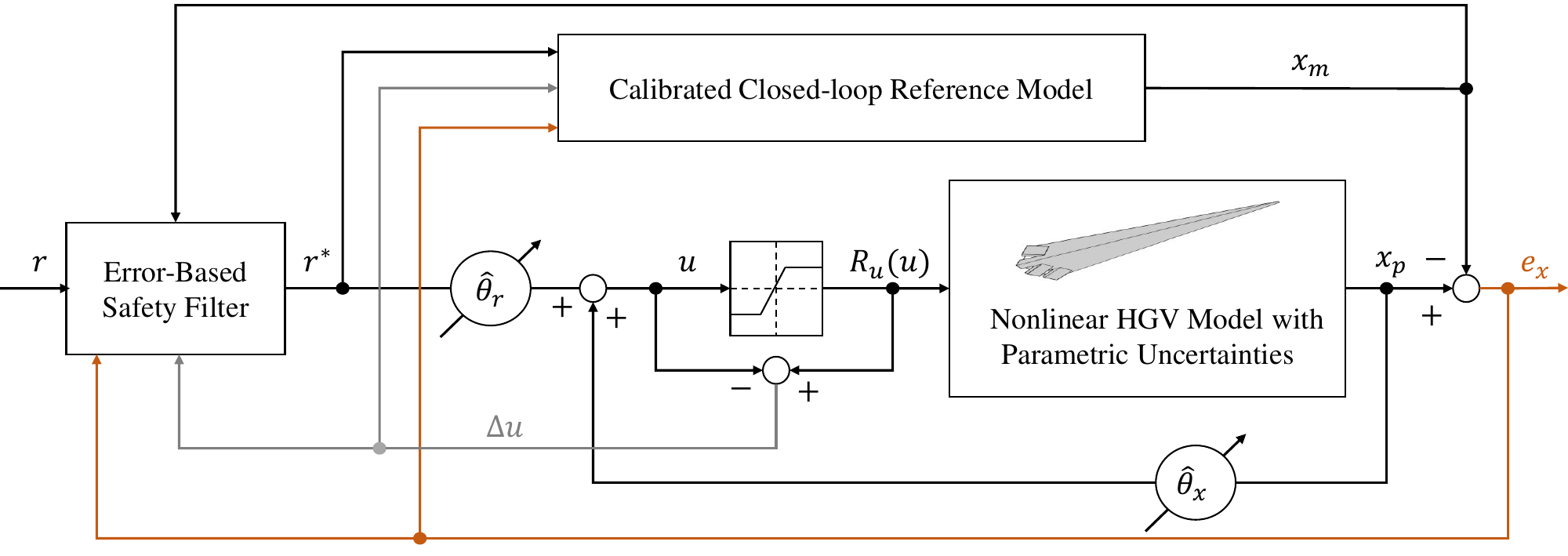}
    \caption{Overview of the proposed adaptive safety-critical control architecture.}
    \label{fig:Closed-LoopAdaptiveControlSystem}
\end{figure*}

We first discuss an architecture that uses a static error buffer (SEB) approach, which keeps the system at all times in a restricted subset of the flight envelope \cite{Autenrieb2023}. We then present a second approach for guaranteed flight envelope protection that uses an Error-Based Safety Filter (EBSF) \cite{fisher_2025} which significantly relaxes the safety condition used in the SEB approach.

\subsubsection{Static Error Buffer (SEB)} \label{subsubsec:SEB}
The fundamental concept of CBFs is to ensure that a set $S \subset \mathbb{R}^n$ which corresponds to a safe behavior, is rendered forward invariant, and to express $S$ using a smooth function $h : \mathbb{R}^n \to \mathbb{R}$ as
\begin{equation}
S = \{x \in \mathbb{R}^n : h(x) \geq 0\}.\label{eq:hx-cbf}
\end{equation}  
Then, forward invariance, which corresponds to the safety objective of $x_p(t) \in S\ \forall t \geq t_0$ becomes equivalent to ensuring that $h(x_p(t)) \geq 0\ \forall t \geq t_0$. In an adaptive control setting, the goal is to  ensure that $S$ is forward invariant in the presence of parametric uncertainties. The idea behind the static safety buffer approach in \cite{Autenrieb2023} is to design a CBF-based safety filter \cite{Ames_2019}, such that the reference model solutions are forward invariant in a constrained set $S_E$ defined later, and to design the closed-loop adaptive system such that its solution converges to that of the reference model. The static buffer corresponds to an addition $\Delta$ in the safety filter so that the reference model state $x_m$ is such that $h(x_m)\geq \Delta$ which ensures that the plant state satisfies \eqref{eq:hx-cbf}. We present the SEB solution below. 

A CBF filter is designed as the solution $r^*(t)$ of the following optimization problem
\begin{equation}
\label{eqn:safety_filter_SEB}
\begin{split}
\argmin_{r^*(t) \in \mathbb{R}^m}
\;\|{r}_{cmd}(t) - r^*(t)\|^2 \quad
\text{s.t.} \quad \eqref{CBF_constraint_reference_model_extended_safety_buffer}
\end{split}
\end{equation}
\begin{equation}
\label{CBF_constraint_reference_model_extended_safety_buffer}
\begin{split}
\frac{\partial h}{\partial x}&\Big|_{x_m(t)} \big[ A_m x_m(t) + B_m r^*(t) + Le_x(t) \\
&+ B_p \hat{\Lambda}(t)\Delta u(t) \big] \geq - \gamma(h(x_m(t)) - \Delta)
\end{split}
\end{equation}
In the above, $\gamma>0$ is a scalar constant. The idea behind the addition of $\Delta$ is that the inequality \eqref{CBF_constraint_reference_model_extended_safety_buffer} ensures that the reference model solution remains in $S_E$ given by
\begin{equation}
\label{safety_subset}
S_E \triangleq \{ x_m(t) \in \mathbb{R}^n \mid h(x_m(t)) \geq E \} \subset S,
\end{equation} 
where 
\begin{equation}
\label{safetyerrorbound}
E = \sup_{t \geq t_0} |h(x_p(t))-h(x_m(t))|
\end{equation} 
which can be shown to be finite through appropriate adaptive control designs. What remains is to  choose $\Delta$ such that $x_m\in S_E$ guarantees that $x_p(t)\in S$ for all $t\geq t_0$, by leveraging the Lipschitz property of $h(\cdot)$. 

This was established in \cite{Autenrieb2023} and is summarized in the theorem below:
\begin{theorem}
\label{Theorem_Safety_Buffer}
Consider the overall closed-loop adaptive system defined by the plant in \eqref{eqn:LinearPlantModel_Problem}, the reference model in \eqref{CCRM}, the control input in \eqref{adaptive_feeback_controller}, and the adaptation laws in \eqref{theta_x_adaption_law}-\eqref{theta_r_adaption_law} under the initial conditions $x_m(t_0) \in S_E$ and $x_p(t_0) \in S$. Let $r^*(t)$ satisfy the safety filter in \eqref{eqn:safety_filter_SEB} for every $t \geq t_0$ with any $\Delta \geq E$ for $E$ defined in \eqref{safetyerrorbound}. Then, we have $x_m(t) \in S_E$ and $x_p(t) \in S\ \forall t \geq t_0$.
\end{theorem}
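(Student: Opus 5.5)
The argument has two parts. First we show that the reference model state stays in $S_E$ (in fact in $S_\Delta$). Then we transfer safety from $x_m$ to $x_p$ using the bound $E$ on the mismatch in $h$.

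\textbf{Step 1: invariance for the reference model.} Define $\bar h(t) = h(x_m(t)) - \Delta$. Along solutions of the CCRM \eqref{CCRM}, we have $\dot{\bar h}(t) = \frac{\partial h}{\partial x}\big|_{x_m(t)} \dot x_m(t)$. The right-hand side of \eqref{CCRM} is exactly the bracket in \eqref{CBF_constraint_reference_model_extended_safety_buffer}. Since $r^*(t)$ is feasible for the filter \eqref{eqn:safety_filter_SEB}, this gives the differential inequality $\dot{\bar h}(t) \geq -\gamma \bar h(t)$. By the comparison lemma, $\bar h(t) \geq e^{-\gamma(t-t_0)}\bar h(t_0)$ for all $t\ge t_0$.

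Here the initial condition must be read as $h(x_m(t_0)) \geq \Delta$. If we only know $h(x_m(t_0))\ge E$ with $\Delta\ge E$, we still get
\[
h(x_m(t)) \geq \Delta + e^{-\gamma(t-t_0)}\big(h(x_m(t_0))-\Delta\big) \geq \min\{h(x_m(t_0)),\Delta\} \geq E,
\]
because the right-hand side is a convex combination of $h(x_m(t_0))$ and $\Delta$. Hence $x_m(t)\in S_E$ for all $t\ge t_0$.

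This step needs existence and continuity of solutions, which requires $r^*(t)$ to be measurable or continuous in $t$. The filter is a QP with a single affine constraint, so it has a closed-form, locally Lipschitz solution whenever $\frac{\partial h}{\partial x}B_m \neq 0$; I would assume this relative-degree-one condition. Finiteness of $E$ comes from the boundedness results of Section~\ref{Sect:Adaptive Control Design}: the Lyapunov function \eqref{Lyapunov_function_CCRM} bounds $e_x$, and $h$ is Lipschitz.

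\textbf{Step 2: transfer to the plant.} By the definition \eqref{safetyerrorbound} of $E$, for every $t\ge t_0$,
\[
h(x_p(t)) \geq h(x_m(t)) - |h(x_p(t))-h(x_m(t))| \geq E - E = 0,
\]
so $x_p(t)\in S$.

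\textbf{Main obstacle.} The difficulty is circularity. $E$ is defined as a supremum over the closed-loop trajectory, and that trajectory depends on $r^*$ and hence on $\Delta$. I would handle this by replacing $E$ with an a priori upper bound computed from the Lyapunov function, namely $|h(x_p)-h(x_m)| \le L_h\|e_x\| \le L_h\sqrt{2V(t_0)/\lambda_{\min}(P)}$. This bound holds independently of $r^*$, since $\dot V\le 0$ for any bounded reference command, provided trajectories stay inside the compact domain of attraction. With $\Delta$ at least this bound, the argument above goes through without circularity.
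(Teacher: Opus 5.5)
Your proof is correct and has the same two-step structure as the paper's: the filter constraint keeps $x_m$ in $S_E$, and then the definition of $E$ gives $h(x_p)\ge h(x_m)-E\ge 0$. The only real difference is in Step 1.

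The paper invokes Nagumo's theorem and checks the constraint only on $\partial S_E$. There $h(x_m)=E$, so $\dot h(x_m)\ge \gamma(\Delta-E)\ge 0$. You instead integrate $\dot{\bar h}\ge-\gamma\bar h$ along the actual trajectory via the comparison lemma. This gives the explicit bound $h(x_m(t))\ge e^{-\gamma(t-t_0)}h(x_m(t_0))+\bigl(1-e^{-\gamma(t-t_0)}\bigr)\Delta\ge\min\{h(x_m(t_0)),\Delta\}$.

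Your route uses the constraint at all times, which the hypothesis supplies anyway. It yields a quantitative margin and covers $x_m(t_0)\in S_E$ with $h(x_m(t_0))<\Delta$ explicitly. It also avoids the uniqueness and well-posedness hypotheses under which Nagumo's theorem is usually stated. Those matter here, because the boundary inequality is only non-strict when $\Delta=E$. The paper's route is shorter and only requires the filter to act on $\partial S_E$.

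Two of your side remarks go beyond what the theorem needs.

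First, the circularity of $E$ is not an issue for the statement as posed, since $\Delta\ge E$ is a hypothesis. Your Lyapunov-based a priori bound is a sensible way to make that hypothesis checkable. Note, however, that evaluating $V(t_0)$ requires known bounds on the true parameters, which the paper only introduces later, in the EBSF section.

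Second, do not rely on the relative-degree-one assumption $\frac{\partial h}{\partial x}B_m\neq 0$ for well-posedness. The theorem simply assumes that $r^*(t)$ satisfies the constraint. Moreover, for the paper's own envelope constraints $h_1,h_2$, which are linear in $\alpha$ and explicitly noted to have relative degree two, one has $\frac{\partial h}{\partial x}B_m=0$. So that assumption would fail exactly in the application of interest.
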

\noindent
\begin{proof}
In accordance with Nagumo's theorem, the forward invariance of \( x_m(t)\)  within \(S_E \) can be ensured via $\dot{h}(x_m(t)) \geq 0 \quad \forall x_m(t) \in \partial S_E$, where $\partial S_E$ is the boundary of the safe subset $S_E$. Based on \eqref{CBF_constraint_reference_model_extended_safety_buffer}, we consider the inequality $\dot{h}(x_m(t)) \geq -\gamma(h(x_m(t)) - \Delta)$ to render the system safe. We choose $\Delta$ to be lower bounded by $ \Delta \geq E$. This delivers $\dot{h}(x_m(t)) \geq 0$ at $\partial S_E$, and hence $x_m(t) \in S_E$ for all $t \geq t_0$. By the definition of $E$, it follows that $h(x_p(t)) \geq h(x_m(t)) - E \geq 0$, and therefore $x_p(t) \in S$ for all $t \geq t_0$.

\end{proof}

While the introduction of a static safety buffer \(\Delta\) provides a straightforward mechanism to enforce constraint satisfaction by uniformly shrinking the admissible set of reference commands, it often results in overly conservative system behavior.
This is because the reference state $x_m(t)$ is constrained to remain in $S_E$ for all $t\geq t_0$, which compromises the tracking performance of the closed-loop system during adaptation. Thus, the next approach, EBSF, reduces conservatism by making the following observation: at times $t$ when $e_x(t)$ is small, the plant is closely tracking the reference model, and thus safety of the reference dynamics in \eqref{CCRM} is equivalent to safety of the closed-loop plant. Thus, at such times when $e_x(t)$ is small, the safety buffer can be removed and $x_m(t)$ can approach the boundary of $S$ more closely without jeopardizing the safety of the closed-loop plant.

\subsubsection{Error-Based Safety Filter (EBSF)}\label{subsubsec:EBSF}


For ease of exposition, we assume that $B_p$ is known, and set $B_m=B_p$. We begin with the following assumptions:


\begin{assumption} \label{asn:theta_x_in_set}
$\theta_x^* \in \Theta_x$ for a known compact, convex set $\Theta_x \subset \mathbb{R}^{m \times n}$.
\end{assumption}


\begin{assumption} \label{asn:lambda_in_set}
$\Lambda = \mathrm{diag}(\lambda_*)$ for a vector $\lambda_* \in \mathbb{R}^m$ which satisfies $\lambda_* \in L := \{\lambda \in \mathbb{R}^m : \underline{\lambda}_i \leq \lambda_i \leq \overline{\lambda}_i\ \forall i\}$ for known $\underline{\lambda}_i, \overline{\lambda}_i > 0$. We denote this as $\Lambda \in \mathcal{L}$.
\end{assumption}

\noindent
We choose
the feedforward gain estimate $\hat{\theta}_r(t)$ in \eqref{adaptive_feeback_controller} as
\begin{equation} \label{eqn:vartheta_r}
    \hat{\theta}_r(t) = \mathrm{diag}(\hat{\vartheta}_r(t)),
\end{equation}
where $\hat{\vartheta}_r(t) \in \mathbb{R}^m$ estimates $1/\lambda_*$ element-wise and has the adaptive law
\begin{equation}
\label{vartheta_r_adaption_law}
\begin{aligned}
    \overline{\vartheta}_r &= - \Gamma_r\mathrm{diag}(r)B_p^\top Pe_x, \\
    \dot{\widehat{\vartheta}}_{r,i} &= \begin{cases} 0, & \begin{matrix*}[l] (\widehat{\vartheta}_{r,i} \geq \frac{1}{\underline{\lambda}_i} \land \overline{\vartheta}_{r,i} \geq 0) \\
    \lor (\widehat{\vartheta}_{r,i} \leq \frac{1}{\overline{\lambda}_i} \land \overline{\vartheta}_{r,i} \leq 0) \end{matrix*} \\ \overline{\vartheta}_{r,i}, & \mathrm{otherwise} \end{cases}\ \forall i.
\end{aligned}
\end{equation}
It is easy to show that the adaptive control analysis in Section \ref{Sect:Adaptive Control Design} still holds by appropriately adjusting the Lyapunov function in \eqref{Lyapunov_function_CCRM}.




In order to apply EBSF to the  flight envelope protection problem discussed in Section \ref{Problem Formulation}, we 
start with \eqref{eq:S_alpha_intersection} and define the following safe regions in state-space:
\begin{equation}
    \label{eqn:S_1}
      S_1(\sigma_0) = \{ x  \in \mathbb{R}^n : h_1(x; \sigma_0) \geq 0 \},
\end{equation}
    \begin{equation}
    \label{eqn:S_2}
    S_2(\sigma_0) = \{ x  \in \mathbb{R}^n : h_2(x; \sigma_0) \geq 0 \},
    \end{equation} where CBF candidates $h_1$ and $h_2$ are defined as
\begin{equation}
    \label{eqn:h_1}
     h_1(x; \sigma_0) = \alpha_{\mathrm{crit,\max}}(\sigma_0)  - \alpha,
\end{equation}
\begin{equation}
\label{eqn:h_2}
h_2(x; \sigma_0) = \alpha  - \alpha_{\mathrm{crit,\min}}(\sigma_0),
\end{equation}
\noindent
 noting that $\alpha$ is an element of the state vector $x \in \mathbb{R}^n$.
The safety objective is to enforce forward invariance of $S(\sigma_0) := S_1(\sigma_0) \cap S_2(\sigma_0)$. In what follows, we will drop the dependence on $\sigma_0$ for ease of exposition.

As $h_1$ and $h_2$ in \eqref{eqn:h_1}-\eqref{eqn:h_2} have relative degree 2 with respect to the dynamics in \eqref{eq:problem_uncertain_plant}, we construct high-order CBFs $h_{12}(x)$ and $h_{22}(x)$ from $h_1$ and $h_2$, respectively, as in \cite{fisher_2025}. Then, we choose $r^*$ according to \eqref{eqn:governor_input_uncertainty} for any constants $\gamma_1, \gamma_2 > 0$, $\delta_1 \in (0, \sup_{{x} \in S} \gamma_1h_{12}({x}))$, and $\delta_2 \in (0, \sup_{{x} \in S} \gamma_2h_{22}({x}))$.
\begin{figure*}[h!]
\begin{equation}
\label{eqn:governor_input_uncertainty}
\begin{split}
&\argmin_{{r}^* \in \mathbb{R}^m}
\;\|{r}_{cmd} - {r}^*\|^2 \\
\text{s.t.}\quad
&
\min_{\theta_x \in \Theta_x,\;\lambda \in L}
\frac{\partial h_{12}}{\partial {x}}\Big|_{{x}_p}
\Big[
(1 - \beta_1({x}_p, {e}_x))\,{z}_m({x}_m, r, e_x, \hat{\Lambda}, \Delta u) \\
&\hspace{3.2em}
+ \beta_1({x}_p, {e}_x)\,
{z}_p(x_p, r, \hat{\theta}_x, \hat{\theta}_r, \theta_x, \lambda, \Delta u)
\Big]
\geq -\gamma_1 h_1({x}_p) + \delta_1,\\
&
\min_{\theta_x \in \Theta_x,\;\lambda \in L}
\frac{\partial h_{22}}{\partial {x}}\Big|_{{x}_p}
\Big[
(1 - \beta_2({x}_p, {e}_x))\,{z}_m({x}_m, r, e_x, \hat{\Lambda}, \Delta u) \\
&\hspace{3.2em}
+ \beta_2({x}_p, {e}_x)\,
{z}_p(x_p, r, \hat{\theta}_x, \hat{\theta}_r, \theta_x, \lambda, \Delta u)
\Big]
\geq -\gamma_2 h_2({x}_p) + \delta_2
\end{split}
\end{equation}
\end{figure*}

The vectors ${z}_m$ and ${z}_p$ are given by
\begin{equation}
\begin{split}
\label{eqn:z_m}
&{z}_m({x}, {r}, e_x, \hat{\Lambda}, \Delta u)
= {} \\
&A_m x + B_mr + L e_x + B_p \hat{\Lambda}\Delta u
\end{split}
\end{equation}
and
\begin{equation}
\label{eqn:z_p}
\begin{split}
&{z}_p({x}, {r}, \hat{\theta}_x, \hat{\theta}_r, \theta_x, \lambda, \Delta{u})
= {} \\
&A_mx + B_p\mathrm{diag}(\lambda)((\hat{\theta}_x - \theta_x) x + \hat{\theta}_r r + \Delta{u}).
\end{split}
\end{equation}
Furthermore,
$\beta_1, \beta_2 : \mathbb{R}^n \times \mathbb{R}^n \to [0,1]$
are state-dependent interpolation parameters that are designed to smoothly
transition between enforcing safety with respect to the nominal reference
dynamics and enforcing safety for the plant under worst-case parametric
uncertainty, and are given by
\begin{equation*}
\label{eqn:ebcg_interpolation}
\beta_j({x}, {e}_x)
=
\mathrm{sat}\!\left(
\frac{
\max\!\left\{
\kappa_j\|e_x\|,
\frac{2\delta_j}{3\gamma_j}
\right\}
- h_{j2}({x})
}{
\max\!\left\{
\kappa\|e_x\|,
\frac{2\delta_j}{3\gamma_j}
\right\}
- \frac{\delta_j}{3\gamma_j}
}
\right),
\end{equation*}
for each $j \in \{1, 2\}$. The function $\mathrm{sat}(\cdot)$ is given by
\begin{equation*} \label{eqn:sat}
    \mathrm{sat}(z) = \begin{cases} 0, & z \leq 0 \\ z, & 0 < z < 1 \\ 1, & z \geq 1 \end{cases},
\end{equation*}
and $\kappa_1, \kappa_2 > 0$ are any constants such that $\sup_{t \geq 0} \kappa_j\|e_x\| < \sup_{{x} \in S} h_{j2}({x})\ \forall j \in \{1, 2\}$. As previously discussed, we know from the adaptive control analysis in Section \ref{Sect:Adaptive Control Design} that $\|e_x(t)\|$, and can derive an upper bound on $\sup_{t \geq 0} \|e_x\|$ from \eqref{Lyapunov_function_CCRM} and bounds on the uncertain parameters.

\begin{remark}
We note that, for the integrated safety filter in \eqref{eqn:governor_input_uncertainty}, the upper and lower bounds on $\alpha$ are enforced via two distinct inequality constraints. In general, when multiple CBF-based constraints are put into a single QP as in \eqref{eqn:governor_input_uncertainty}, it is difficult to guarantee feasibility of the QP for all time. However, the structure of the considered FEP problem allows us to avoid infeasibility in practice. Specifically, at any given time, the system is approaching either the upper or the lower bound on $\alpha$, but never both at once. Thus, by properly tuning the hyperparameters $\gamma_1, \gamma_2, \delta_1, \delta_2, \kappa_1, \kappa_2 > 0$, we can ensure that the QP is always feasible.
\end{remark}

\begin{theorem}
\label{Theorem_EBSF}
Consider the overall closed-loop adaptive system defined by the plant in \eqref{eqn:LinearPlantModel_Problem}, the reference model in \eqref{CCRM}, the control input in \eqref{adaptive_feeback_controller}, and the adaptation laws in \eqref{theta_x_adaption_law} and \eqref{eqn:vartheta_r}-\eqref{vartheta_r_adaption_law} under the initial conditions $x_p(t_0) \in S$ and $\hat{\vartheta}_r(t_0)^{-1} \in L$, where the inverse is taken element-wise. Let $r^*(t)$ satisfy the safety filter in \eqref{eqn:governor_input_uncertainty} for every $t \geq t_0$. Then, we have $x_p(t) \in S\ \forall t \geq t_0$.
\end{theorem}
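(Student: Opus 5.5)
The plan is to reduce the claim to forward invariance of the high-order CBF sets and then apply a Nagumo-type argument to each constraint of the filter in \eqref{eqn:governor_input_uncertainty} separately. Following \cite{fisher_2025}, define for $j\in\{1,2\}$ the high-order CBF
\[
h_{j2}(x)=\dot h_j(x)+\gamma_j h_j(x),
\]
where $\dot h_j$ is taken along $A_m x$, so that $h_{j2}$ is linear in $x$. Define the sets $C_j=\{x: h_j(x)\ge 0,\ h_{j2}(x)\ge 0\}$. The standard HOCBF argument gives the following: if $h_{j2}(x_p(t))\ge 0$ for all $t$, then $\dot h_j\ge -\gamma_j h_j$, and with $h_j(x_p(t_0))\ge 0$ this yields $h_j(x_p(t))\ge 0$. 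So it suffices to show that $h_{j2}(x_p(t))$ cannot become negative. The initial condition $x_p(t_0)\in S$ would be interpreted (or strengthened) so that $h_{j2}(x_p(t_0))\ge 0$ as well.

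The key step is to show that the filter constraint implies
\[
\dot h_{j2}(x_p(t)) \;\ge\; -\gamma_j h_{j2}(x_p(t))+\delta_j
\]
whenever $h_{j2}(x_p)$ is small, specifically when $h_{j2}(x_p)\le \delta_j/(3\gamma_j)$. (The constraint as printed uses $h_1$ on the right-hand side; I would read it as $h_{j2}$, consistent with \cite{fisher_2025}.)

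First I would rewrite the true plant closed loop. By \eqref{matching_condition_1}, \eqref{adaptive_feeback_controller} and $B_m=B_p$,
\[
\dot x_p = A_m x_p + B_p\Lambda\big((\hat\theta_x-\theta_x^*)x_p+\hat\theta_r r^*+\Delta u\big) = z_p(x_p,r^*,\hat\theta_x,\hat\theta_r,\theta_x^*,\lambda_*,\Delta u).
\]
Hence, by Assumptions~\ref{asn:theta_x_in_set}--\ref{asn:lambda_in_set}, the inner minimum over $\Theta_x\times L$ lower-bounds $\frac{\partial h_{j2}}{\partial x}z_p$ evaluated at the true parameters.

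Next I use the interpolation $\beta_j$. When $h_{j2}(x_p)\le\delta_j/(3\gamma_j)$, the numerator in $\beta_j$ is at least the denominator, so $\beta_j=1$. The constraint then reduces exactly to the worst-case plant condition, which gives $\dot h_{j2}(x_p)\ge -\gamma_j h_{j2}(x_p)+\delta_j>0$. Consequently $h_{j2}$ is strictly increasing whenever it lies in $[0,\delta_j/(3\gamma_j)]$, so it can never cross zero from above. A comparison or Nagumo argument then gives $h_{j2}(x_p(t))\ge 0$ for all $t$.

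The projection law \eqref{vartheta_r_adaption_law} together with $\hat\vartheta_r(t_0)^{-1}\in L$ keeps $\hat\vartheta_r^{-1}\in L$ for all time. This matters for the interior regime and for boundedness of $e_x$, which keeps $\kappa_j\|e_x\|$ below $\sup h_{j2}$ as assumed.

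The main obstacle I expect is making this rigorous. Three points need care:
\begin{enumerate}[(i)]
\item The constraint must hold with the true $(\theta_x^*,\lambda_*)$ even though $z_p$ depends on $\hat\theta_r$ chosen online. This is handled by the min being over the unknowns only.
\item The existence and regularity of solutions, since $r^*$ from the QP must be locally Lipschitz or at least measurable to apply Nagumo. I would assume feasibility, as in the Remark preceding the theorem, and continuity of the QP solution.
\item The saturation term $\Delta u$ depends on $r^*$ itself through $u$, which makes the constraint implicit. I would treat $\Delta u$ as a measured signal evaluated at the current time, as the filter does.
\end{enumerate}
If these hold, intersecting over $j=1,2$ gives $x_p(t)\in S_1\cap S_2=S$.
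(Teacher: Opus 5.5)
Your argument is correct. The paper gives no proof of its own and defers to \cite{fisher_2025}, and your argument follows the structure the filter is built around. You identify $\dot x_p$ with $z_p$ at $(\theta_x^*,\lambda_*)\in\Theta_x\times L$. You observe that $\beta_j=1$ whenever $h_{j2}(x_p)\le\delta_j/(3\gamma_j)$, so that $\frac{d}{dt}h_{j2}(x_p)\ge 2\delta_j/3>0$ there. You then close with the HOCBF cascade, which is legitimate because $\frac{\partial h_j}{\partial x}B_p=0$ makes the derivative of $h_j$ along $A_m x$ equal to its true derivative along the plant. Your reading of the right-hand side as $-\gamma_j h_{j2}(x_p)+\delta_j$ is also the right one.

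Your strengthening of the initial condition to $h_{j2}(x_p(t_0))\ge 0$ is necessary, not merely an interpretation. Since $h_j$ has relative degree two, a point of $\partial S$ with $\frac{d}{dt}h_j<0$ leaves $S$ under any input. One small correction: the projection keeping $\hat\vartheta_r^{-1}\in L$ is not used in the invariance step. Its role is to keep $\mathrm{diag}(\lambda)\hat\theta_r$ uniformly positive over $\lambda\in L$ so that the filter is feasible, and the theorem simply assumes feasibility.
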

\noindent
We refer the reader to \cite{fisher_2025} for the proof of Theorem~\ref{Theorem_EBSF}.

\section{Numerical Studies}
\label{Numerical_Assessment}
\subsection{Single Operating-Point Analysis}
The following results assess and compare the satisfaction of the state constraints for different control architectures for the GHGV-2 model described in Appendix~\ref{FirstAppendix} using \textsc{MATLAB}/Simulink. In all simulations, the plant is controlled by the same baseline and adaptive controllers described in Section~\ref{Sect:Adaptive Control Design} and Section~\ref{sec:cbf}. The operating point was chosen as $\sigma_0=[8, 30\,km]$, and the initial state as $x=[0,0,0,0,0,0]^T$. To evaluate the effectiveness of the proposed EBSF method, different safety filter configurations are compared. The blue curve represents the baseline case without any safety filtering, where the reference command is applied directly to the controller without modification. The green curve corresponds to safety enforcement through reference clipping. In this approach, reference commands that would drive the system outside the admissible state set are simply cut off, or saturated, at the admissible boundary before being passed to the controller. The orange curve shows the response obtained with a classical CBF-based safety filter with no adaptation \cite{Ames_2017}. The purple curve represents the CBF formulation augmented with a static error buffer described in Section \ref{subsubsec:SEB}, denoted as CBF+SEB, where the admissible set is conservatively reduced by a fixed safety margin. Finally, the black curve corresponds to the proposed CBF formulation with EBSF described in Section \ref{subsubsec:EBSF}, denoted as EBSF, where the safety margin is adjusted based on the estimated tracking error between the reference model and the actual system response. A command signal $\alpha_{cmd}(t)$ (black dash-dotted) is chosen as a step with a magnitude of $5^\circ$, while the safety limit is artificially set to $\alpha_{\mathrm{crit,\max}} = 4^\circ$ (red dash-dotted). The ideal case in which there are no uncertainties in the aerodynamics and control effectiveness is presented in  Fig.~\ref{fig:nominal_FEP}, which shows the time histories of the measured angle of attack $\alpha(t)$, the filtered reference $\alpha^*(t)$ (if a safety filter is applied), the pitch rate $q(t)$, and the commanded pitch moment $M_y(t)$. These plots illustrate the feasibility of the nonlinear model to meet state constraints when there are no uncertainties.
\begin{figure}[h!]
    \centering
    \includegraphics[width=\columnwidth]{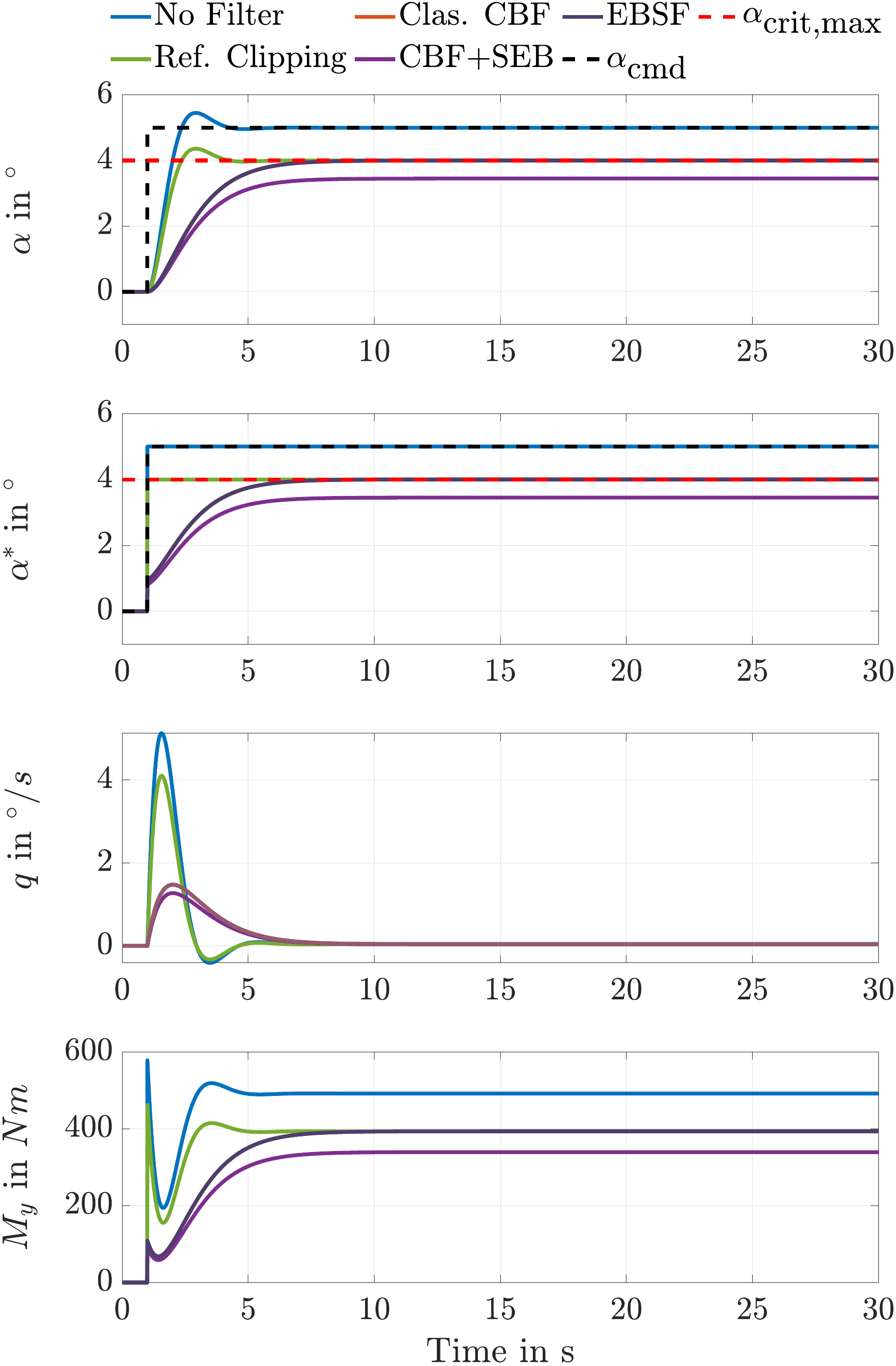}
    \caption{Comparison of the proposed adaptive closed-loop system with different safety-filter methods for a case with no model mismatches.}
    \label{fig:nominal_FEP}
\end{figure}
Several observations can be made: Without a safety filter, the angle of attack violates the state constraint of $4^\circ$ and converges to $5^\circ$. Reference clipping enforces the filtered reference $\alpha_{cmd}(t) \le \alpha_{crit,\max}$ but does not guarantee satisfaction of the state constraint; closed-loop transients cause the trajectory of $\alpha(t)$ to exceed the limit within the time window from 0 to 5 seconds. Both the classical CBF and the EBSF 
ensure that $\alpha(t)$ remains within the admissible set, reaches the boundary, and exhibits nearly identical transients. As expected, the CBF+SEB approach is conservative: the constant buffer, designed for the worst-case model mismatch scenario, enforces convergence strictly within the safe set, resulting in a greater offset from the boundary in both $\alpha(t)$ and $\alpha_{cmd} (t)$. The rate $q(t)$ and moment $M_y(t)$ traces support these mechanisms, meaning that the safety filter incorporates the higher-order dynamics and reduces $q(t)$ early in the transient to anticipate the second-order behavior and prevent boundary violation. $M_y(t)$ is also not close to the magnitude constraints and saturation effects are not in conflict with enforcing state constraints.

We next study the effect of parametric uncertainty. As in Section III, we introduce a 20\% 
mismatch in $C_{M,\alpha}$ and $C_{M,q}$ 
and reduce the pitch control effectiveness by $10\%$. All other simulation settings remain unchanged. The corresponding results are shown in Fig.~\ref{fig:uncertain_FEP}. Similar comparisons are made as in Fig.~\ref{fig:nominal_FEP}.

\begin{figure}[h!]
    \centering
    \includegraphics[width=\columnwidth]{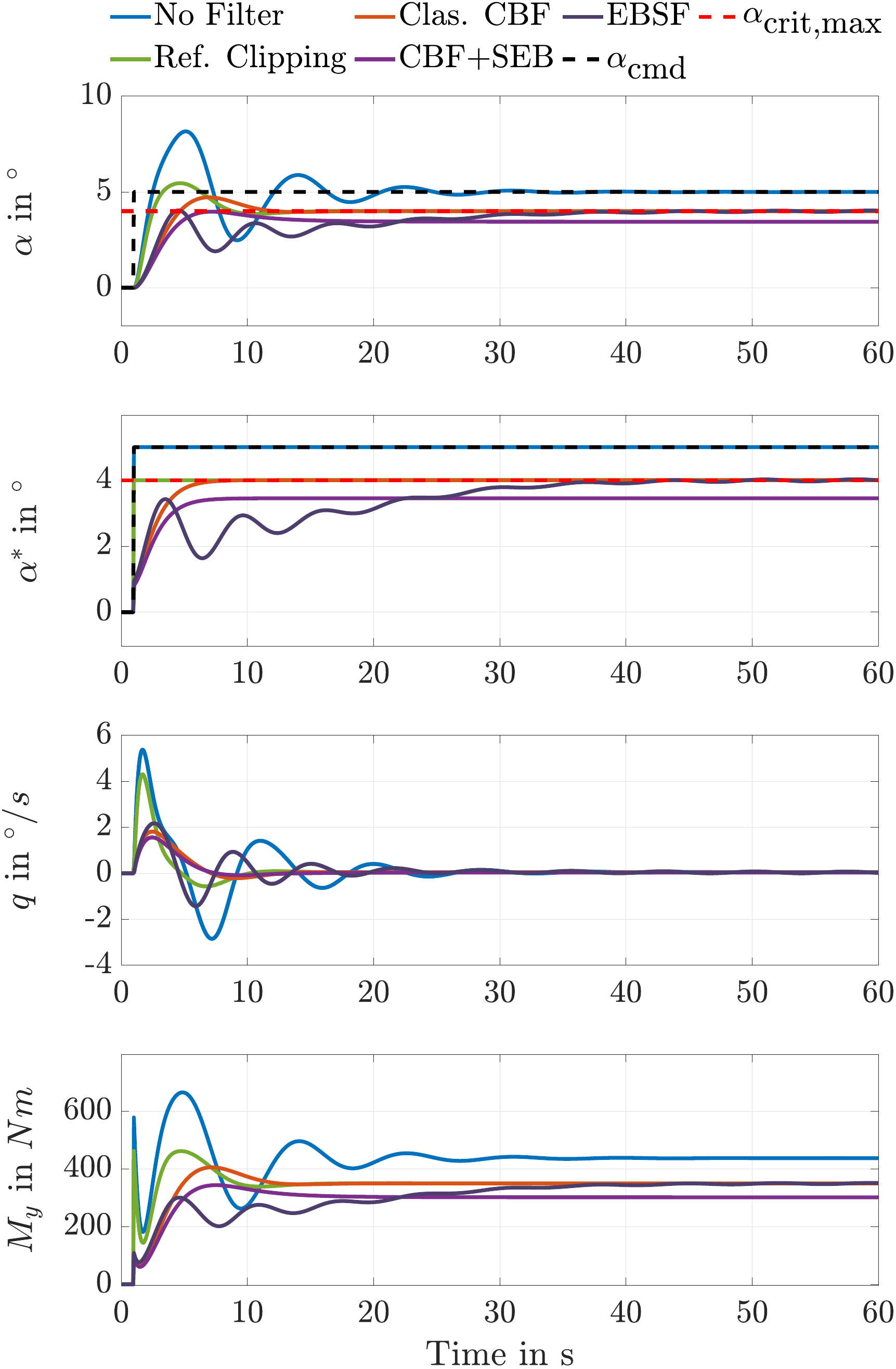}
    \caption{Comparison of the proposed adaptive closed-loop system with different safety-filter methods for a case with model mismatches.}
    \label{fig:uncertain_FEP}
\end{figure}

In this case, as in Fig.~\ref{fig:nominal_FEP}, the system without filtering  violates the constraint significantly and converges to $5^\circ$. Reference clipping also fails to ensure safety, with an overshoot of $\alpha(t)$ beyond $4^\circ$. The classical CBF reduces the overshoot but still exhibits a limit violation, because the safety condition is evaluated on the nominal reference dynamics and does not account for parametric uncertainties. CBF+SEB remains safe due to the steady-state offset from the boundary, which also can be seen in the reduced control effort in $M_y(t)$ and in the corresponding rate $q(t)$. Furthermore, it can be  seen that CBF+SEB has to add additional conservatism in order to address the overshoot dynamics, leading to a steady-state deviation from the safety boundary. EBSF, in contrast, remains safe while asymptotically approaching the boundary after a cautious transient. As the parameter estimates improve and the tracking error $e_x(t)$ decreases, the error-based margin gradually shrinks, allowing the system to recover the requisite tracking performance.

The results show a clear hierarchy among the evaluated methods. Reference clipping modifies only the command and therefore cannot guarantee state safety. The classical CBF ensures safety only when the reference model is sufficiently accurate. CBF+SEB remains safe under uncertainty, but its fixed buffer makes the response persistently conservative. In contrast, EBSF provides robustness while reducing conservatism over time: it initially tightens the constraint under uncertainty and then relaxes it as the tracking error \(e_x(t)\) decreases. In the nominal case, it behaves like the classical CBF. Overall, the results demonstrate the effectiveness of the proposed FEP architecture and support the use of an error-based safety filter for robust yet less conservative operation.

\subsection{Constrained Flight Corridor}

The capabilities of the proposed architecture are next evaluated for a case in which an aerodynamic corridor constraint, as introduced in Section~\ref{sec:Aerodynamic_Corridor_Constraints}, is considered.  However, the purpose of this study is not to reproduce a fully mission-accurate hypersonic trajectory, but rather to construct a controlled test case which illustrates the ability of the proposed framework to enforce altitude-related safety constraints in the presence of uncertainties.
Specifically, the hypersonic system is required to satisfy the time-varying constraint
\begin{equation}
H_{\min}(t) \leq H(t) \leq H_{\max}(t),\label{eq:FEP}
\end{equation}
where the corridor bounds are defined relative to the commanded descent trajectory using a fixed width of $\pm 300\,\mathrm{m}$.  

In this simulation study, the GHGV-2 model is initialized at an operating point $\sigma_0=[6, 20\,km]^T$. Thereafter, the vehicle is trimmed along a desired altitude trajectory. The scenario considered corresponds to a descent maneuver from $20\,\mathrm{km}$ to $17\,\mathrm{km}$ over a time horizon of $60\,\mathrm{s}$, followed by steady flight. As mentioned earlier, the admissible flight corridor is defined such that it follows the nominal trajectory path with an additional altitude safety band of $\pm 300\,\mathrm{m}$. 
To assess the impact of the proposed adaptive controller, an upset event is introduced at $t=40\,\mathrm{s}$ by switching from the nominal plant model to a damaged system with altered aerodynamic characteristics, mass properties, and reduced control effectiveness. This induces a model mismatch that directly affects the influence of lift on the dynamics and thus the altitude evolution. In the studies reported below, the adaptive controller remains active in all cases, while the proposed EBSF-based safety filter is selectively enabled to isolate its effect on constraint enforcement.

\begin{figure}
    \centering
    \includegraphics[width=\columnwidth]{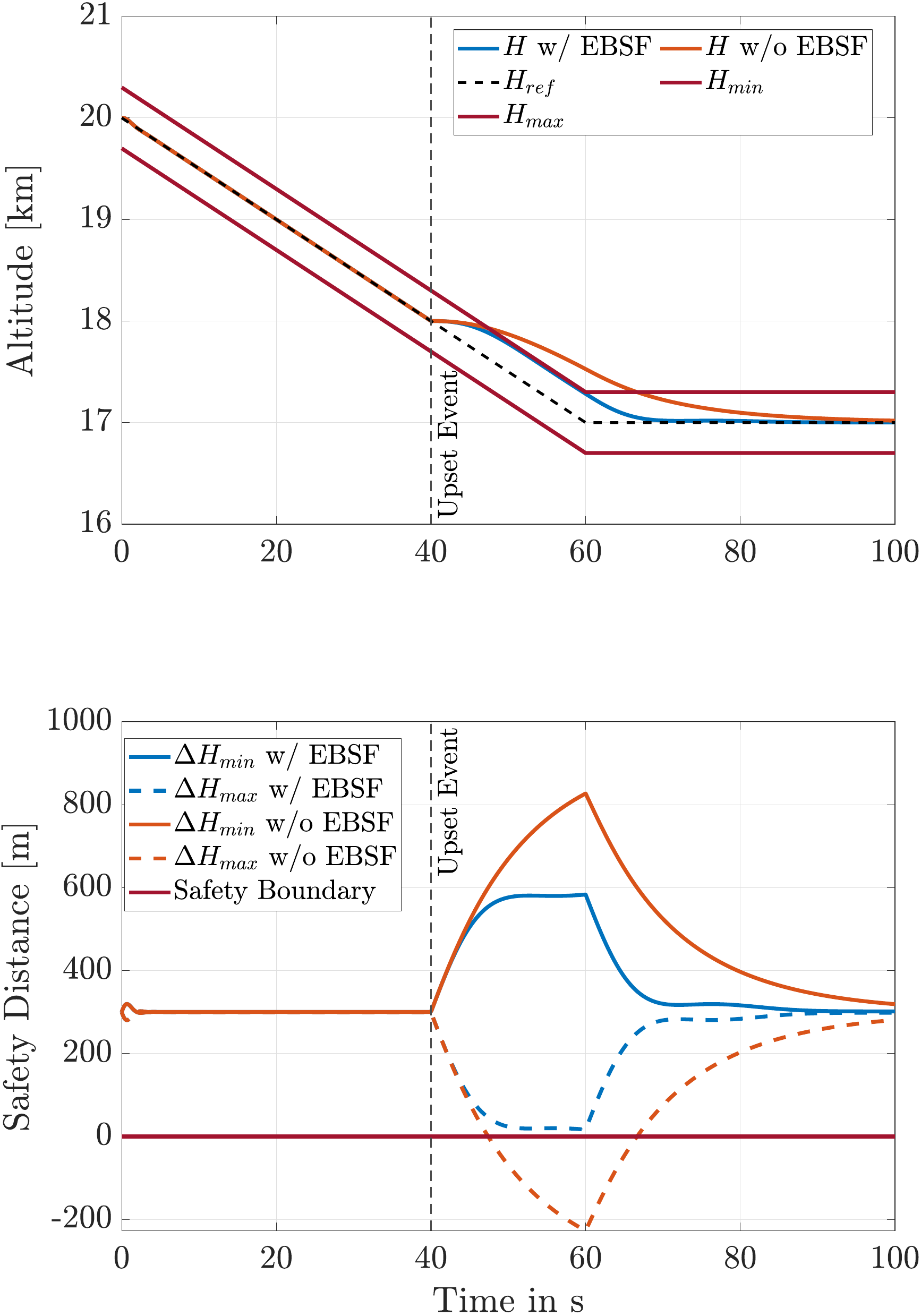}
    \caption{Results of flight corridor case with and without EBSF.}
    \label{fig:corridor_with_and_without_filter}
\end{figure}

Fig.~\ref{fig:corridor_with_and_without_filter} shows the resulting closed-loop responses without and with the EBSF for the integrated adaptive closed-loop system. The figures present the altitude trajectory together with the corridor bounds and  the corresponding safety distances to the lower and upper boundaries. Prior to the upset event, the system tracks  the commanded descent accurately, and the altitude remains well within the admissible corridor. This indicates that, under nominal conditions, the adaptive closed-loop system is sufficient to  achieve the desired trajectory tracking performance and, in addition,  render the system safe during tracking.

Once the damaged dynamics are activated at $t=40\,\mathrm{s}$, a pronounced deviation from the nominal trajectory is observed. In particular, the altitude of the  system without EBSF increases more rapidly than commanded and exceeds the upper corridor boundary $H_{\max}(t)$. This violation is clearly reflected in the safety distance $\Delta H_{\max}(t)=H_{\max}(t)-H(t)$, which becomes negative over a finite time interval. This means the closed-loop system fails to maintain forward invariance of the admissible set when EBSF is not included. The underlying mechanism can be explained by the altitude dynamics derived in \eqref{eq:H_ddot_alpha}. Following the upset event, the damaged dynamics lead to a situation in which the generated lift is no longer compatible with the desired flight path evolution. In particular, the altered dynamics produce an excessive climb tendency relative to the commanded descent trajectory, causing the vehicle to deviate upward from the prescribed corridor. As a result, the altitude increases beyond the admissible region before the controller is able to fully compensate for the mismatch. Although the adaptive controller begins adjusting its parameters in response to the emerging tracking error, this adaptation occurs on a slower time scale and is therefore insufficient to avoid the state constraint violation.

In contrast to the unfiltered case, the system with EBSF remains within the prescribed corridor for the entire simulation horizon, and both safety distances remain non-negative throughout the maneuver. This demonstrates that the proposed safety filter successfully preserves forward invariance of the admissible altitude set. In particular, it should be noted that the safety filter  modifies the closed-loop behavior following the upset event through proper integration and interaction with the adaptive controller. Consequently, the closed-loop system reacts significantly earlier than in the unfiltered case, preventing the large altitude excursion observed previously. 


The comparison highlights the limitation of adaptive tracking alone in safety-critical scenarios. It should also be noted that for both cases, with and without EBSF, the actuator magnitude limits were not critical. Hence, the observed differences between the compared responses are not caused by saturation effects, but by the safety-filter mechanisms and their ability to account for closed-loop transients and model mismatch. Although the adaptive controller recovers the altitude after the upset event, it cannot guarantee corridor satisfaction during the transient phase. In the unfiltered case, the altitude temporarily leaves the admissible corridor and the safety distance becomes negative. With the proposed safety filter, the altitude remains within the prescribed bounds, and the safety distance stays non-negative despite model mismatch and disturbance. Thus, the EBSF provides a reliable safety layer on top of the adaptive controller, ensuring corridor compliance while intervening only when a constraint violation is predicted. Most importantly, the simultaneous accommodation of adaptation and flight envelope protection in the form of the state constraint \eqref{eq:FEP} is fully realized through the incorporation of EBSF in the control architecture, which was illustrated in this section through the simulation studies of the full nonlinear model of the GHGV-2.
\section{Conclusion}
\label{Conclusions}
This paper presented an adaptive safety-critical control framework for flight envelope protection of overactuated hypersonic glide vehicles under model uncertainty. Magnitude-limited control inputs are accommodated within the adaptive closed-loop system through a calibrated closed-loop reference model. On top of this adaptive control architecture, flight envelope state constraints are enforced by an error-based control barrier function safety filter that adjusts the admissible safe set using the measured plant–reference mismatch. Simulations with the DLR GHGV-2 demonstrate stable tracking under limited control authority while maintaining flight envelope constraints during transient adaptation and model mismatch.

\section*{Acknowledgment}
\label{Acknowledgment}
The first author would like to acknowledge Dr.\,Patrick Gruhn from the DLR Institute of Aerodynamics and Flow Technology for helpful discussions and for supporting the vehicle's high-fidelity modeling.


\bibliographystyle{IEEEtran_bst}
\bibliography{references/ref} 


\appendices 
    \section{}
    \subsection{Nonlinear Flight Dynamics Model}
    \label{FirstAppendix}
    \eqref{eqn:Forces_EOM} and \eqref{eqn:Moments_EOM} present the generalized equations of motion for translation and rotation of a flight system in body coordinates:
\begin{equation}
    \label{eqn:Forces_EOM}
    {R} = \begin{bmatrix}F_x \\ F_y \\ F_z \end{bmatrix} 
    = m \begin{bmatrix} \dot{u} \\ \dot{v} \\ \dot{w} \end{bmatrix} 
    + m \begin{bmatrix} p \\ q \\ r \end{bmatrix} 
    \times \begin{bmatrix} u \\ v \\ w \end{bmatrix},
\end{equation}

\begin{equation}
    \label{eqn:Moments_EOM}
    {Q} = \begin{bmatrix} M_x \\ M_y \\ M_z \end{bmatrix} 
    = {I} \begin{bmatrix} \dot{p} \\ \dot{q} \\ \dot{r} \end{bmatrix} 
    + \begin{bmatrix} p \\ q \\ r \end{bmatrix} 
    \times {I} \begin{bmatrix} p \\ q \\ r \end{bmatrix}.
\end{equation}
Here, \( m \) is the mass, \( {I} \) denotes the moment of inertia matrix, \( (p, q, r)^T \) are the body angular rates, and \( (u, v, w)^T \) are the translational velocity components expressed in the body frame. The angular rates \( (p, q, r)^T \) are typically obtained by integrating \eqref{eqn:Moments_EOM}. 

An important kinematic relationship connects the time derivatives of the flight path bank angle \( \mu \), the angle of attack \( \alpha \), and the sideslip angle \( \beta \) to the body angular rates and the derivatives of the flight path angle \( \gamma \) and track angle \( \chi \):

\begin{align}
\label{eqn:Advanced Rotational Kinematics 1}
\begin{bmatrix} \dot{\mu} \\ \dot{\alpha} \\ \dot{\beta} \end{bmatrix}
&= {T}_1(\alpha, \beta) \begin{bmatrix} p \\ q \\ r \end{bmatrix}
+ {T}_2(\mu, \gamma, \chi) \begin{bmatrix} \dot{\gamma} \\ \dot{\chi} \end{bmatrix} \\
&= 
\begin{bmatrix}
    \displaystyle \frac{\cos\alpha}{\cos\beta} & 0 & \displaystyle \frac{\sin\alpha}{\cos\beta} \\
    \displaystyle -\cos\alpha \tan\beta & 1 & \displaystyle -\sin\alpha \tan\beta \\
    \displaystyle \sin\alpha & 0 & -\cos\alpha
\end{bmatrix}
\begin{bmatrix} p \\ q \\ r \end{bmatrix}\\
&+ 
\begin{bmatrix}
    \displaystyle \cos\mu \tan\beta & \sin\gamma + \sin\mu \tan\beta \cos\gamma \\
    \displaystyle -\frac{\cos\mu}{\cos\beta} & \displaystyle -\frac{\sin\mu \cos\gamma}{\cos\beta} \\
    \displaystyle -\sin\mu & \cos\mu \cos\gamma
\end{bmatrix}
\begin{bmatrix} \dot{\gamma} \\ \dot{\chi} \end{bmatrix}. \nonumber
\end{align}

The relationship in \eqref{eqn:Advanced Rotational Kinematics 1} relates the attitude angle derivatives \( (\dot{\mu}, \dot{\alpha}, \dot{\beta})^T \) to the body angular rates and to the derivatives of the inertial flight path angle \( \dot{\gamma} \) and track angle \( \dot{\chi} \). The differential equations for \( \dot{\gamma} \) and \( \dot{\chi} \) are given by
\begin{equation}
    \dot{\gamma} = \frac{1}{mV} \underbrace{\left[ L \cos(\mu) - W \cos(\gamma) - Y \sin(\mu) \cos(\beta) \right]}_{f_{\dot\gamma}},
\end{equation}
\begin{equation}
    \dot{\chi} = \frac{1}{mV \cos(\gamma)} \underbrace{\left[ L \sin(\mu) + Y \cos(\mu) \cos(\beta) \right]}_{f_{\dot\chi}},
\end{equation}
where \( L \) is the aerodynamic lift force, \( Y \) is the aerodynamic side force, \( W = mg \) is the gravitational force, and \( V \) is the airspeed.  We consider perturbations around an operating point characterized by
\begin{equation}
\label{eq:sigma_def}
{\sigma} \;=\; \begin{bmatrix}\operatorname{Ma}_0 & H_0\end{bmatrix}^{\!\top},
\end{equation}
i.e., \({\sigma}_0\) includes only Mach number and altitude, while \(\alpha\) and \(\beta\) remain explicit states. Let \(V_0 = \operatorname{Ma}_0 a_0\) denote the slowly varying airspeed at altitude \(H_0\). Linearizing \(f_{\dot\gamma}\) and \(f_{\dot\chi}\) about the equilibrium manifold at \(({x}_0,{u}_0,{\sigma}_0)\) yields
\begin{equation*}
\label{Gleich: Perturbation_Gamma}
\delta \dot{\gamma} = w_\gamma \left[ \frac{\partial f_{\dot\gamma}}{\partial{x}_0}\!\left({x}_0,{u}_0\right)\, \delta{x}\ +\ \frac{\partial f_{\dot\gamma}}{\partial{u}_0}\!\left({x}_0,{u}_0\right)\, \delta{u} \right],
\end{equation*}
\begin{equation*}
\label{Gleich: Perturbation_Chi}
\delta \dot{\chi} = w_\chi \left[ \frac{\partial f_{\dot\chi}}{\partial{x}_0}\!\left({x}_0,{u}_0\right)\, \delta{x}\ +\ \frac{\partial f_{\dot\chi}}{\partial{u}_0}\!\left({x}_0,{u}_0\right)\, \delta{u} \right],
\end{equation*}
where $w_\gamma=1/mV_0$ and $w_\chi=1/(mV_0\cos(\gamma_0))$. During the hypersonic endoatmospheric glide phase, \(V_0\) is large such that
\begin{equation*}
m V_0 \gg 
\left|\ \frac{\partial f_{\dot\gamma}}{\partial{x}_0}\left({x}_0,{u}_0\right)\, \delta{x}\ +\ \frac{\partial f_{\dot\gamma}}{\partial{u}_0}\left({x}_0,{u}_0\right)\, \delta{u}\ \right| ,
\end{equation*}
\begin{equation*}
m V_0 \gg 
\left|\ \frac{\partial f_{\dot\chi}}{\partial{x}_0}\left({x}_0,{u}_0\right)\, \delta{x}\ +\ \frac{\partial f_{\dot\chi}}{\partial{u}_0}\left({x}_0,{u}_0\right)\, \delta{u}\ \right| .
\end{equation*}
Hence, the coupling between ${T}_2(\mu,\gamma,\chi)$ and $(\dot{\gamma}, \dot{\chi})^\top$ in \eqref{eqn:Advanced Rotational Kinematics 1} can be neglected for the considered operating range, leading to the following simplified, state–affine relation:
\begin{equation}
\label{eqn:Lineare NDI-Außenschleife 1}
{x}_1 =
\begin{bmatrix} 
 \dot{\mu} \\ 
 \dot{\alpha} \\ 
 \dot{\beta}
\end{bmatrix} = 
{T}_1 (\alpha,\beta) 
\begin{bmatrix} 
 p \\ 
 q \\ 
 r 
\end{bmatrix}.
\end{equation}

In many applications for fast flying vehicles, the contribution of the gyroscopic cross product in \eqref{eqn:Moments_EOM} can be assumed to be small relative to the aerodynamic moments. Neglecting gravity–induced, centrifugal, and Coriolis \emph{moments} for the attitude channel then yields
\begin{equation}
\label{eqn:Surrogate Moments}
{x}_2 =
\begin{bmatrix} 
 \dot{p} \\ 
 \dot{q} \\ 
 \dot{r} 
\end{bmatrix} = {I}^{-1}
\begin{bmatrix} 
M_x \\ 
M_y \\ 
M_z 
\end{bmatrix},
\end{equation}
with aerodynamic moments modeled as
\begin{equation}
\begin{bmatrix} 
M_x \\ 
M_y \\ 
M_z 
\end{bmatrix}
= \underbrace{\bar{q}\, S\, l}_{\Gamma}
\begin{bmatrix}
 C_{M_x}({\sigma}_0) \\ 
 C_{M_y}({\sigma}_0) \\ 
 C_{M_z}({\sigma}_0) 
\end{bmatrix},
\end{equation}
where \(\bar{q}\) is the dynamic pressure, \(S\) the aerodynamic reference area, and \(l\) the reference length.

To obtain a linear model scheduled by \({\sigma}_0\), we approximate the moment–coefficient increments affinely via state perturbations, obtaining a Jacobian depending on \({\sigma}_0\):
\begin{equation}\label{Gl:affine_model_of_moment_coefs}
\begin{bmatrix}
 C_{M_x}({\sigma}_0)  \\ 
 C_{M_y} ({\sigma}_0)  \\ 
 C_{M_z} ({\sigma}_0) 
\end{bmatrix} = {C}_J({\sigma}_0) 
\begin{bmatrix}
\mu \\ 
\alpha \\ 
\beta \\ 
 p \\ 
 q \\ 
 r
\end{bmatrix},
\end{equation}
where
\begin{equation*}
{C}_J({\sigma}_0) =
\begin{bmatrix*}
 C_{M_x,\mu} ({\sigma}_0) & C_{M_y,\mu} ({\sigma}_0) & C_{M_z,\mu} ({\sigma}_0)\\
 C_{M_x,\alpha} ({\sigma}_0) & C_{M_y,\alpha} ({\sigma}_0) & C_{M_z,\alpha} ({\sigma}_0)\\
 C_{M_x,\beta} ({\sigma}_0) & C_{M_y,\beta} ({\sigma}_0) & C_{M_z,\beta} ({\sigma}_0)\\
 C_{M_x,p} ({\sigma}_0) & C_{M_y,p} ({\sigma}_0) & C_{M_z,p} ({\sigma}_0)\\
 C_{M_x,q} ({\sigma}_0) & C_{M_y,q} ({\sigma}_0) & C_{M_z,q} ({\sigma}_0)\\
 C_{M_x,r} ({\sigma}_0) & C_{M_y,r} ({\sigma}_0) & C_{M_z,r} ({\sigma}_0)
\end{bmatrix*}^T.
\end{equation*}

Combining Eqs.~\eqref{eqn:Lineare NDI-Außenschleife 1}, \eqref{eqn:Surrogate Moments}, and \eqref{Gl:affine_model_of_moment_coefs} delivers the following linear representation for the attitude dynamics:
\begin{equation}
\label{Gl: Surrogate LPV Model}
\dot{{x}}
={A} ({\sigma}_0)
\underbrace{\begin{bmatrix*} 
 \mu \\ 
 \alpha\\
 \beta\\
 p\\
 q\\
 r
\end{bmatrix*}}_{{x}} 
+ {B}({\sigma}_0) 
\underbrace{\begin{bmatrix*} 
 M_x\\
 M_y\\
 M_z
\end{bmatrix*}}_{{u}},
\end{equation}
with
\begin{equation*}
{A} ({\sigma}_0) =
\begin{bmatrix*}
{0}_{3 \times 3} \hspace{0.3cm} {T}_1 (\alpha,\beta) \\ 
{\Xi}({\sigma}_0)
\end{bmatrix*},
\qquad
{B} ({\sigma}_0) =
\begin{bmatrix*} 
0_{3\times 3}\\ 
{I}^{-1}
\end{bmatrix*},
\end{equation*}
and
\begin{equation*}
{\Xi}({\sigma}_0) = {I}^{-1}\, \Gamma\, {C}_J({\sigma}_0).
\end{equation*}



\end{document}